\newtheorem{theorem}{Theorem}
\newtheorem{lemma}{Lemma}
\newtheorem{corollary}{Corollary}
\newtheorem{definition}{Definition}
\newtheorem{example}{Example}
\newcommand{\blackslug}{\mbox{\hskip 1pt \vrule width 4pt height 8pt
depth 1.5pt \hskip 1pt}}
\newcommand{\qed}{\quad\blackslug\lower 8.5pt\null\par\noindent}
\newenvironment{proof}{\par\noindent{\bf Proof:}}{\qed \par}
\newcommand{\cR}{\mbox{${\cal R}$}}
\title{Quality of local equilibria in discrete exchange economies}
\author{Daniel Lehmann \\ School of Computer Science and Engineering, \\
Hebrew University, Jerusalem 91904, Israel \\
lehmann@cs.huji.ac.il, Tel: +972544489124 \\
}
\date{February 2020}
\begin{document}
\maketitle


\begin{abstract}
This paper defines the notion of a local equilibrium of quality $(r , s)$, 
\mbox{$0 \leq r , s$},
in a discrete exchange economy: a partial allocation and item prices that guarantee
certain stability properties parametrized by the numbers $r$ and $s$.
The quality $( r , s )$ measures the fit between the allocation and the prices:
the larger $r$ and $s$ the closer the fit.
For \mbox{$r , s \leq 1$} this notion provides a graceful degradation 
for the conditional equilibria of ~\cite{Fu_Kleinberg_Lavi:EC12} which are exactly the
local equilibria of quality $( 1 , 1 )$.
For \mbox{$1 < r , s $} the local equilibria of quality $( r , s )$ are {\em more stable} 
than conditional equilibria.
Any local equilibrium of quality $( r , s )$ provides, 
without any assumption on the type of the agents' valuations, an allocation
whose value is at least \mbox{$\frac{r s} { 1 + r s }$} the optimal fractional allocation.
In any economy in which all agents' valuations are $a$-submodular, 
i.e., exhibit complementarity bounded by \mbox{$a \: \geq \: 1$},
there is a local equilibrium of quality $( \frac{1} {a} , \frac{1}{a} )$. 
In such an economy any greedy allocation provides a local equilibrium of quality 
\mbox{$( 1 , \frac{1}{a} ) $}.
Walrasian equilibria are not amenable to such graceful degradation.

Keywords: discrete exchange economies; local equilibria; item prices

\end{abstract}
 
\section{Background} \label{sec:background}
Economic theory, since Adam Smith~\cite{AdamSmith:Wealth}, has a Leibnizian flavor: 
some ``invisible hand" provides an efficient situation.
It also has a distributed flavor: there is no central command.
The means by which such a feat is achieved are prices:
publicly posted prices are accepted by the agents and this enables them to sell
and buy to enhance their welfare without negotiating with other agents and
such sales and purchases lead to a situation that is favorable to everybody.
L\'{e}on Walras~\cite{Walras:1874} proposed a formal description of the situation
and the prices obtained, now called a competitive or Walrasian equilibrium.
Walrasian prices equate supply and demand.
Once such prices are publicly known and no one can influence them, if each agent pursues
his or her own individual interest without any consideration for others, each agent will obtain
exactly all he or she wants.
This stands in sharp contrast with other social situations in which, for the benefit of all, 
each one has to give up on some of his or her wishes.
This Leibnizian flavor of economic theory has a pervasive influence on public opinion and economic policy.

It was only Abraham Wald~\cite{Wald:1936} who brought
to the attention of economists the problem of proving rigorously
the existence of such equilibria.
Intuitively, the existence of competitive equilibria, or their enforcement in a
market, depends on each trader renouncing the idea of influencing 
the prices and on the prices equating supply and demand. 
This typically happens in markets with a large number of traders.

For economies of divisible objects, the existence of a competitive equilibrium has
been proved in two different types of situations.
Arrow and Debreu~\cite{ArrowDebreu:54} proved that, if the valuation of every trader is
 concave, then the existence of a competitive equilibrium is guaranteed. 
It is clear that real economic agents do not always exhibit concave valuations: 
one may well hesitate between a beef roast for $x$ shekels and a lamb shoulder 
for $y$ shekels but not be interested in buying half the roast and half the shoulder for 
\mbox{$\frac{x + y }{2}$} shekels.
Their result holds for any number of traders.
Aumann~\cite{Aumann_Continuum:66} proves the existence of a competitive equilibrium for 
any valuations of the individual traders, but only if there is a
continuum of traders. This is also an idealization of real markets.

For economies of indivisible objects,
Kelso and Crawford~\cite{KelsoCraw:82} proved, for any number of agents, the existence
of a competitive equilibrium under a different assumption: they assume
that the valuation of each trader is (gross) substitutes.
In~\cite{LLN:GEB} it was shown that substitutes valuations have zero measure among
all valuations and therefore, given any substitutes valuation, 
some arbitrarily small modification will define a valuation that is {\em not} substitutes. 
The {\em substitutes} assumption is therefore very restrictive.
Larger families of valuations for which a Walrasian equilibrium always exists 
are described in~\cite{Sun_Yang:econometrica, Hatfield_Milgrom:2005, Hatfield+4:JEP}.
Each of them assumes some additional structure on the set of items, in addition to some 
conditions on the traders valuations.

If we consider now real economic activity, Walrasian equilibria can be considered a suitable 
justification of Adam Smith's general perspective only if real markets can be accurately 
described by formal economies that possess a Walrasian equilibrium. 
The results mentioned above, and others, show that, in formal economies, even those that
intend to model real agents, no Walrasian equilibrium is guaranteed.
One of the greatest achievements of the notion of a Walrasian equilibrium is that 
it explains prices, which are ubiquitous in markets. 
But prices seem to be there even in the absence of a Walrasian equilibrium and, therefore,
cannot always be explained by such equilibria.
We ask what drives such prices and what is their function?

One could expect that markets, if they do not possess a Walrasian equilibrium, possess
something close to a Walrasian equilibrium that would define prices.
But, so far, no such satisfactory notion has been proposed. 
Appendix~\ref{sec:quasi-Walras} shows that there is little hope that such a notion exists.
But, this paper shows that, on the contrary, the notion of a conditional equilibrium, 
defined  in~\cite{Fu_Kleinberg_Lavi:EC12}, can be parametrized to provide a graceful
degradation that provides approximate equilibria to essentially all discrete economies. 

\section{This paper} \label{sec:this}
This paper proposes a perspective change:
given an exchange economy, don't ask whether it possesses a Walrasian equilibrium,
ask how good is the best equilibrium.
To this purpose the notion of an equilibrium of quality \mbox{$( r , s )$}, 
for \mbox{$0 \leq r , s$} will be defined in Definition~\ref{def:local_eq}.
An equilibrium consists of a partial allocation and a price vector 
and the quantities $r$ and $s$ measure the tightness of the fit between allocation 
and price vector.
The notion of an equilibrium developed in this work, termed a {\em local equilibrium}, 
generalizes the notion of a {\em conditional equilibrium} 
defined in~\cite{Fu_Kleinberg_Lavi:EC12}.
A local equilibrium of quality \mbox{$( 1 , 1 )$} is exactly a conditional equilibrium.

One may have preferred to consider a notion of equilibrium that generalizes that of
a Walrasian equilibrium, not that of a conditional (local) equilibrium.
No such notion has been shown, so far, to have interesting properties.
Appendix~\ref{sec:quasi-Walras} defines such a possible notion of equilibrium, parametrized
by a single parameter $q$,and shows that, in even simple exchange economies, 
only equilibria of quality $0$ may exist, 
in stark contradiction with  Theorems~\ref{the:equilibrium} and~\ref{the:greedy} below.

Many recent works, e.g., \cite{Fu_Kleinberg_Lavi:EC12, Christodoulou_Kovacs_Schapira:JACM, 
Feldman_Gravin_Lucier:posted, Babaioff_Endowment:EC18} 
have studied the problem of approximating the social optimum, 
the role that prices can play in doing so and the properties of different notions of equilibria.
The general impression that one gathers is that as long as the agents' valuations are 
submodular, things are reasonably well understood and one can justify the general perspective
portrayed by Adam Smith, by generalizing Walras' model.
But, on the whole, such approaches have not been convincing 
so far when the agents' valuations exhibit complementarities, as is often 
the case for real life agents.
Section 7 of~\cite{LLN:GEB} introduced the notion of an $a$-submodular valuation,
i.e., a valuation exhibiting complementarity bounded by $a$, \mbox{$1 \leq a $}.
The parameter $a$ measures how far a valuation is from being submodular.
This paper shows that most results obtained about exchange economies of submodular agents
can be extended to such economies of $a$-submodular agents: the strength of the result 
obtained depends gracefully on the parameter $a$.
Since essentially any valuation is $a$-submodular for some $a$, this paper generalizes
what is known for exchanges of submodular agents to almost arbitrary agents.

This paper proposes an original justification for Adam Smith's general perspective. 
Each agent has an initial endowment.
Agents perform simple profitable bilateral trades: 
two agents agree that a single item will be transferred
from one agent to the other for a certain amount of money.
Such trades have a double effect: first the social welfare is increased and secondly 
item prices become more and more publicly known.
After a certain time we expect to find the economy in a state where the social
welfare cannot be improved upon by transfers of a single item from an agent to another
(we shall call such a situation a local optimum) and where a price is publicly known
for each item.
Such prices support the allocation obtained in a way to be described below
that we shall call a local equilibrium. 
The quality \mbox{$( r , s )$} of a local equilibrium measures the fit between the allocation and the prices.
We shall show that any local optimum, i.e., any allocation in which no simple bilateral trade 
can be profitable to both the seller and the buyer, provides a local equilibrium whose quality
depends on the amount of complementarity exhibited by the agents' valuations.
The less complementarity, the better the quality.
We shall also show that any local equilibrium of high quality has a high social value, 
i.e., its value is close to the socially optimal value.
The general perspective becomes: the agents trade bilaterally and in doing so item prices
crystallize and all this leads to a situation in which no simple bilateral trade 
can be profitable for both agents and to prices making the situation a local equilibrium.
In typical situations, we claim, the amount of complementarity is limited, 
the quality of the local equilibrium obtained is high and therefore its
social value is close to optimal, but not optimal.
At the end the agents are not allocated their preferred bundle at the posted prices 
but a bundle that cannot be improved upon by any simple bilateral trade 
at the posted prices.
If one wants to design a market in which an increased social welfare is attained, 
one should design means to support trades more complex than transfers of
single items, e.g., bilateral trading of bundles or multilateral trades.

\section{Plan of this paper} \label{sec:plan}
Section~\ref{sec:exchange} defines discrete exchange economies, fractional allocations
and the fractional optimal allocation.
Section~\ref{sec:Walras} briefly recalls Walrasian equilibria and points to an original
presentation of the main results about them in an Appendix.
In Section~\ref{sec:local_eq} we define a notion of equilibrium, 
 a {\em \mbox{$ ( r , s ) $}-local equilibrium}, parametrized by two quality parameters $r$,
and $s$ such that \mbox{$0 \: \leq r , s $}.
If \mbox{$r , s \leq 1$} this notion is weaker than a Walrasian equilibrium.
In Section~\ref{sec:first} we show that $( r , s )$-local equilibria are  
\mbox{$\frac{r s}{1 + r s }$}-efficient.
The remainder of this paper is devoted to the defense of the following thesis:
typical economies possess high quality local equilibria and bilateral trading can discover
them or move the economy  towards them.
In Section~\ref{sec:local_existence} we describe and discuss examples of
economies illustrating the absence of high quality local equilibria.
Section~\ref{sec:bounded} recalls the bounded complementarity introduced 
in~\cite{LLN:GEB}: $a$-submodular valuations are valuations whose degree 
of complementarity is bounded by the parameter $a$, \mbox{$1 \leq a$}.
Almost all valuations are $a$-submodular for some $a$.
It deepens their study.
Section~\ref{sec:opt_equil} provides a characterization of \mbox{$( r , s )$}-local equilibria in 
$a$-submodular economies.
Section~\ref{sec:local_optimum} recalls the notion of a local optimum 
from~\cite{Babaioff_Endowment:EC18}. 
It proves a second welfare theorem: in an $a$-submodular economy, 
any local optimum can be associated with a price vector to provide 
a \mbox{$( \frac{1}{a} , \frac{1}{a} ) $}-local equilibrium.
Therefore, in an $a$-submodular economy every local optimum provides a high-quality
local equilibrium.
In particular, if all valuations are submodular any local optimum can be associated 
with a \mbox{$ ( 1 , 1 ) $}-local equilibrium.
Section~\ref{sec:greedy} sharpens the results of~\cite{LLN:GEB} about greeedy allocations
in the presence of bounded complementarity.
In such situations any greedy allocation method provides a local equilibrium and a good
approximation of the social optimum.
Section~\ref{sec:substitutes} deals with the case all agents' valuations are substitutes.
In this case Walrasian equilibria are exactly those \mbox{$ ( 1 , 1 ) $}-local equilibria 
that satisfy an additional condition: no agent is interested in exchanging one of his items 
for an item he does not possess at the given prices.
Section~\ref{sec:open} presents a list of open questions.
Section~\ref{sec:conclusion} concludes with a reflexion on the role item prices 
play in exchange economies.

\section{Exchange economies, allocations, fractional allocations and prices} 
\label{sec:exchange}
We consider exchange economies of indivisible objects, with
private values, quasi-linear utilities,
no externalities, free disposal and normalization.

\begin{definition} \label{def:exchange}
An exchange economy is defined by a finite set of {\em indivisible} objects $X$,
of size $m$, a finite set $N$ of agents, of size $n$, and by $n$ valuations:
the valuation \mbox{$v_{i} : {2}^{X} \longrightarrow \cR$}, \mbox{$i \in N$} 
describes the preferences of agent $i$.
A bundle \mbox{$D \subseteq X$} possesses the value  
\mbox{$v_{i}(D)$} for agent \mbox{$i \in N$}.
We shall assume that those valuations satisfy:
\begin{itemize}
\item
{\bf Free disposal:} $v_i(A) \le v_i(B)$, whenever $A \subseteq B$.
\item
{\bf Normalization:} $v_i(\emptyset)=0$.
\end{itemize}
\end{definition}

In an exchange economy a partial allocation (hereafter called allocation) is a function 
\mbox{$f : X \longrightarrow N \cup \{ unallocated \}$}.
Item $j$ of $X$ is allocated to agent $f(j)$ or left unallocated.
An allocation is {\em total} if no item is left unallocated.
The set of items allocated to agent $i$, $f^{- 1}(i)$ in allocation $f$ 
will be denoted by $S_{i}^{f}$.
For an allocation $f$, we define its social value by:
\mbox{$val(f) = \sum_{i \in N} v_{i}(S^{f}_{i})$}.
In a given exchange economy, the social value of the allocation of maximal social value is
denoted by $M$.
We shall follow the convention that is now well established in complexity theory
and say that an allocation $f$ is an $a$-approximation (\mbox{$a \geq 1$}) 
of the social optimum iff \mbox{$ M \, \leq \, a \, val(f) $}.

In~\cite{BikhMamer:Equ} Bikhchandani and Mamer considered {\em fractional allocations}, 
a generalization of the notion of an allocation .
A fractional allocation consists of a nonnegative number $x_{i}^{D}$ for every \mbox{$i \in N$}
and every \mbox{$D \subseteq X$} satisfying the constraints:
\begin{enumerate}
\item 
for any \mbox{$i \in N$}, 
\begin{equation} \label{eq:con1LPR}
\sum_{D \subseteq X} \: x_{i}^{D} \: \leq \: 1,
\end{equation}
and
\item for any \mbox{$j \in X$}, 
\begin{equation} \label{eq:con2LPR}
\sum_{i \in N} \sum_{D \subseteq X , j \in D} \: x_{i}^{D} \: \leq \: 1.
\end{equation}
\end{enumerate}
The value of a fractional allocation $x$ is defined by:
\[
val(x) \: = \: \sum_{i \in N} \sum_{D \subseteq X} x_{i}^{D} v_{i}(D).
\]
The value of the fractional allocation of maximal value will be denoted $M_{F}$.
It is the solution of the linear program \textbf{LP}.
\begin{center}
\textbf{Linear Programming (LP):}
\end{center}
\nopagebreak Maximize
\begin{equation}
\label{eq:objLPR}
\sum_{i \in N} \sum_{D \subseteq X} x^{i}_{D} v_{i}(D)
\end{equation}
under the constraints
\begin{equation}
\label{eq:2con1LPR}
\sum_{D \subseteq X , j \in D} \sum_{i \in N} x^{i}_{D} \leq 1 , {\rm \ for \ all \ }
j \in X,
\end{equation}
\begin{equation}
\label{eq:2con2LPR}
\sum_{D \subseteq X} x^{i}_{D} \leq 1
{\rm \ for \ all \ } i \in N , {\rm \ and\ }
\end{equation}
\begin{equation}
\label{eq:intconsLPR}
x^{i}_{D} \in [0 , 1] , {\rm \ for \ all \ } i \in N , D \subseteq X.
\end{equation}
The optimal solution set to \textbf{LP} can be viewed as the set of
efficient {\em fractional} allocations.
Agents may be allocated fractional bundles of the form:
\mbox{$\alpha \in [0,1]^{2^{X}}$} as long as
\mbox{$\sum_{D \subseteq X} \alpha(D) \leq 1$} 
(this is constraint~(\ref{eq:2con2LPR}) ) and, for each item,
the sum of the fractions of it that are allocated does not exceed one
(this is constraint~(\ref{eq:2con1LPR}) ).
Agent $i$ values the fractional bundle $\alpha$ at:
\mbox{$\sum_{D \subseteq X} \alpha(D) v_{i}(D)$}.

Any allocation $f$ is the fractional allocation for which 
\mbox{$x_{i}^{D} = 1$} iff \mbox{$D = S_{i}^{f}$} and \mbox{$x_{i}^{D} = 0$}
otherwise. Clearly, in any exchange economy, $M$ is the solution of the integer version of 
\textbf{LP} where \mbox{$x^{i}_{D} \in \{0 , 1\}$}. 
Therefore we have \mbox{$M \: \leq \: M_{F}$}.

A price vector is a function \mbox{$p : X \longrightarrow \cR_{+}$} that
assigns a nonnegative real number (its price) to each item.
The price of item $j$, $p(j)$ will be denoted $p_{j}$.

\section{Walrasian equilibria} \label{sec:Walras}
In preparation for Section~\ref{sec:local_eq} where another notion of equilibrium
will be defined, the reader can find in  Appendix~\ref{sec:Walras_app}
the definition and the properties of the classical notion of a Walrasian equilibrium.
The presentation there is original.

In a Walrasian equilibrium every agent is allocated the bundle he prefers amongst all
possible bundles, if only he considers he cannot have any influence on the prices.
A Walrasian equilibrium is the best of all possible situations for each and every agent,
at the publicly posted prices.

The existence of a Walrasian equilibrium is not guaranteed in general.
It is only in exchange economies in which every agent has 
a {\em (gross) substitutes} valuation
that such an equilibrium is guaranteed to exist, by a result of~\cite{KelsoCraw:82}.
By~\cite{GulStacc:99}, to any valuation that is not substitutes one may add 
unit-demand valuations to define an exchange economy without a Walrasian equilibrium.
The family of substitutes valuations has zero measure, as shown in~\cite{LLN:GEB},
which implies that any substitutes valuation can be approached as close as one wants by valuations
that are {\em not} substitutes.
Therefore one can say that Walrasian equilibria are quite rare.

This paper's goal is to propose a less optimistic but more realistic view of the states
into which economies can evolve.
Agents will not find themselves in the best of all possible worlds but in a relatively good
situation, a situation that cannot be improved upon easily.

\section{Local equilibria} \label{sec:local_eq}
We shall now define the central  notion of this paper. 
A local equilibrium comprises a partial allocation, a price vector and two quality parameters.
The quality parameters are real numbers \mbox{$r , s \geq 0$}: 
they measure the fit between the allocation and the prices, 
{\em not the social value of the allocation}.
A local equilibrium of quality, \mbox{$ ( 1 , 1 ) $}, is exactly 
a conditional equilibrium as defined  in~\cite{Fu_Kleinberg_Lavi:EC12}.

In a conditional equilibrium two conditions are satisfied.
\begin{enumerate}
\item the allocation gives every agent a bundle with a nonnegative utility, at the given prices. 
In other terms no agent is willing to give his whole bundle back, given the prices,
but he could, for example, prefer selling a subset of his bundle or exchanging an item $k$ 
he has been allocated for an item $l$ allocated to some other agent and pay 
\mbox{$p_{l} - p_{k}$},
\item no agent wishes to buy any set of items he does not own, at the given prices.
\end{enumerate}

We generalize this definition by adding quality parameters $r$, $s$.
This paper is mainly concerned with the case \mbox{$0 \leq r , s \leq 1$} and,
in this case, the numbers $r$ and $s$ can be viewed as a discount factor for the prices.
For \mbox{$1 < r$} or \mbox{$1 < s$ } they are a price mark-up.

\begin{definition}
\label{def:local_eq}
Suppose an economy \mbox{$E = (N , X , \: v_{i} , i \in N)$} is given
and let \mbox{$0 \: \leq \: r , s$}.
A \mbox{$ ( r , s ) $}-local equilibrium 
\mbox{$( f , p )$} is a pair where $f$ is a partial allocation of the
items to the agents and $p$ is a price vector that satisfy the following three conditions:
\begin{enumerate}
\item \label{unalloc} 
for any \mbox{$j \in X$} such that \mbox{$f(j) = unallocated$} one has
\mbox{$p_{j} = 0$},
\item \label{individual_rat} {\bf r-Individual Rationality} 
for any \mbox{$i \in N$} one has
\begin{equation} \label{eq:nosale}
v_{i}(S^{f}_{i}) \: \geq \: r \, \sum_{j \in S^{f}_{i}} \: p_{j}, 
\end{equation}
\item \label{out_stability} {\bf s-Outward Stability} 
for any \mbox{$i \in N$} and any \mbox{$A \subseteq X$} such that
\mbox{$A \cap S^{f}_{i} = \emptyset$} one has
\begin{equation} \label{eq:nobuy}
s \,  v_{i}(A \mid S^{f}_{i}) \: \leq \: \sum_{j \in A} \: p_{j}.
\end{equation}
\end{enumerate}
We shall refer to a \mbox{$ ( q , q ) $}-local equilibrium simply as a $q$-local equilibrium.
\end{definition}

We shall be mainly interested in $q$-local equilibria with \mbox{$0 < q \leq 1$}.
Note that {\bf Outward Stability} is formulated for a bundle $A$, not for a single item.
One could have considered a property, stronger than {\bf Individual Rationality}, that would
require that no agent is ready to relinquish even part of his/her bundle at the given discounted 
prices:
\begin{definition} \label{def:strong_equilibrium}
A \mbox{$ ( r , s ) $}-local equilibrium that satisfies, for any \mbox{$i \in N$} 
and any nonempty \mbox{$A \subseteq S_{i}^{f}$} the following
\begin{equation} \label{eq:strong-rationality}
{\bf Strong \ Individual \ Rationality} \ 
v_{i}( A \mid S_{i}^{f} - A) \, \geq \, s \, \sum_{j \in A} p_{j}.
\end{equation}
will be called a strong  \mbox{$ ( r , s )$}-local equilibrium.
\end{definition}
{\bf Individual Rationality} follows from {\bf Strong Individual Rationality} 
by taking \mbox{$A \, = \,$} $S_{i}^{f}$.

Two remarks are in order:
\begin{itemize}
\item we use the term {\em equilibrium} according to usage in Economics, and not according
to its use in Physics. In this paper at least, we do not describe a dynamics that leads to the 
equilibria of Definition~\ref{def:local_eq}.
\item the quality parameters $r$ and $s$ do not measure the quality of the allocation,
they measure the tightness of the fit between allocation and prices. 
Example~\ref{ex:first} below shows that
one can find local equilibria that differ only in the prices, not in the allocation and have different
qualities even though the allocation is the same.
\end{itemize}

It is easy to see that 
\begin{enumerate}
\item any allocation, with any prices, provides a $0$-local equilibrium,
\item if \mbox{$r \: \leq \: r'$} and \mbox{$s \: \leq \: s'$}any 
\mbox{$ ( r' , s' )$}-local equilibrium is a \mbox{$ ( r , s ) $}-local equilibrium, 
\item any conditional equilibrium, and in particular any Walrasian equilibrium 
is a $1$-local equilibrium, and
\item any $1$-local equilibrium is a conditional equilibrium.
\end{enumerate}

Our first example shows that the allocation of a $1$-local equilibrium need not be optimal.
Here and in the forthcoming examples we shall limit ourselves to the study of 
$q$-local equilibria, i.e., in the case \mbox{$r = s$}.
\begin{example} \label{ex:first}
Let \mbox{$X = \{ a , b \}$} and \mbox{$N = \{1 , 2 \}$}.
Both agents are unit-value: \mbox{$v_{1}(a) \: = \: $} \mbox{$v_{2}(b) \: = \:$} $4$, 
\mbox{$v_{1}(b) \: = \:$} \mbox{$v_{2}(a) \: = \:$} $3$ and 
\mbox{$v_{i}(ab) \: = \:$} $4$ for any $i$.
\end{example}
Let $f$ be the sub-optimal allocation which allocates $a$ to agent $2$ and $b$ to agent $1$
and let \mbox{$p_{a} \: = \:$} \mbox{$p_{b} \: = \:$} $2$.
The pair \mbox{$(f , p )$} is a $1$-local equilibrium 
that is not a Walrasian equilibrium.
With the price vector \mbox{$p_{a} \, = \,$} \mbox{$p_{b} \, = \,$} $\frac{1}{2}$
the allocation $f$ provides a $\frac{1}{2}$-local equilibrium.

The question of the existence of high quality local equilibria is postponed to
Section~\ref{sec:local_existence} and we shall now show 
that any \mbox{$ ( r , s ) $}-local equilibrium
provides a \mbox{$(1 + \frac{1}{ r s })$}-approximation of the optimal fractional allocation.

\section{First local social welfare theorem} \label{sec:first}
The first social welfare theorem says that the (partial) allocation $f$ 
of any Walrasian equilibrium has maximal social value:
\mbox{$val(f) \: = \:$} \mbox{$M$}.
We shall show a similar result for \mbox{$ ( r , s ) $}-local equilibria.
Any \mbox{$ ( r , s )$}-local equilibrium, not necessarily strong, is a 
\mbox{$1 + \frac{1}{r s}$}-approximation, i.e., is at least 
\mbox{$\frac{ r s} { 1 + r s }$} efficient.
The strength of this result is that no assumption on the valuations 
of the agents is necessary.
\begin{theorem}[First local social welfare theorem] \label{the:first}
In any exchange economy, if \mbox{$(f , p)$} is a 
\mbox{$ ( r , s ) $}-local equilibrium with \mbox{$r , s > 0$}, then,
for any fractional allocation $x$, one has
\mbox{$val(x) \: \leq$} \mbox{$ ( 1 + \frac{1} { r s } ) \, val(f)$} 
and therefore, for any partial allocation $g$,
\mbox{$val(g) \: \leq \:  ( 1 + \frac{ 1} { r s } )  \, val(f)$}.
\end{theorem}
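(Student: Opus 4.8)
The plan is to bound $val(x)$ for an \emph{arbitrary} fractional allocation $x$ by comparing every fractionally allocated bundle with the bundle $S_i^f$ that the equilibrium assigns to the same agent, and then to control the total posted price using the two equilibrium conditions. First I would fix an agent $i$ and a bundle $D \subseteq X$ with $x_i^D > 0$, split $D$ into $D \cap S_i^f$ and $A := D \setminus S_i^f$, and use free disposal together with the definition of the marginal value to write $v_i(D) \le v_i(D \cup S_i^f) = v_i(S_i^f) + v_i(A \mid S_i^f)$. Since $A \cap S_i^f = \emptyset$, the $s$-Outward Stability condition~(\ref{eq:nobuy}) gives $v_i(A \mid S_i^f) \le \frac{1}{s}\sum_{j \in A} p_j \le \frac{1}{s}\sum_{j \in D} p_j$, the last step using that prices are nonnegative. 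Hence $v_i(D) \le v_i(S_i^f) + \frac{1}{s}\sum_{j \in D} p_j$ for every $D$.

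Next I would take the $x_i^D$-weighted sum over all $D$ and then over all $i$. Using constraint~(\ref{eq:2con2LPR}), $\sum_{D} x_i^D \le 1$, together with $v_i(S_i^f) \ge v_i(\emptyset) = 0$ (free disposal and normalization), the first term contributes at most $\sum_{i \in N} v_i(S_i^f) = val(f)$. For the second term, interchanging the order of summation gives $\sum_{i}\sum_{D} x_i^D \sum_{j \in D} p_j = \sum_{j \in X} p_j\bigl(\sum_{i}\sum_{D \ni j} x_i^D\bigr)$, and constraint~(\ref{eq:2con1LPR}) for each $j$, again with $p_j \ge 0$, bounds this by $\sum_{j \in X} p_j$. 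Thus at this stage $val(x) \le val(f) + \frac{1}{s}\sum_{j \in X} p_j$.

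It then remains to bound the total posted price by the equilibrium's own value. Condition~\ref{unalloc} of Definition~\ref{def:local_eq} says every unallocated item has price $0$, so $\sum_{j \in X} p_j = \sum_{i \in N}\sum_{j \in S_i^f} p_j$, and the $r$-Individual Rationality condition~(\ref{eq:nosale}) bounds each inner sum by $\frac{1}{r} v_i(S_i^f)$, whence $\sum_{j \in X} p_j \le \frac{1}{r}\,val(f)$. Combining the two estimates yields $val(x) \le val(f) + \frac{1}{rs}\,val(f) = \bigl(1 + \frac{1}{rs}\bigr)\,val(f)$; the claim for a partial allocation $g$ is then immediate because every allocation is a fractional allocation. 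I do not anticipate a genuine obstacle: the argument is a weighted-LP-duality-style counting, and the only points needing care are invoking free disposal to justify both $v_i(D) \le v_i(D \cup S_i^f)$ and $v_i(S_i^f) \ge 0$, and using nonnegativity of prices at the two places where a sum over a subset of items is enlarged or a packing constraint is applied.
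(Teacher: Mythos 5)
Your proof is correct and follows essentially the same route as the paper: the paper factors the argument through a technical lemma (Lemma~\ref{the:technical}, applied with $a = \tfrac{1}{s}$, $b = \tfrac{1}{r}$), which performs exactly your decomposition $v_i(D) \le v_i(S_i^f) + v_i(D \setminus S_i^f \mid S_i^f)$, the same interchange of sums against the LP packing constraints, and the same use of Individual Rationality plus the zero price of unallocated items to bound $\sum_j p_j$ by $\tfrac{1}{r}\,val(f)$. The only cosmetic difference is that you enlarge $\sum_{j \in A} p_j$ to $\sum_{j \in D} p_j$ before swapping the order of summation, while the paper keeps the sum over $D \setminus S_i^f$; both are harmless by nonnegativity of prices.
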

\begin{proof}
By Lemma~\ref{the:technical} in Appendix~\ref{sec:technical}, 
taking \mbox{$a = \frac{1}{s}$} and \mbox{$b = \frac{1}{r}$}.
\end{proof}

One sees that any $1$-local equilibrium provides a $2$-approximation 
of the fractional optimum.
Theorem~\ref{the:first} therefore improves on Proposition 1 
in~\cite{Fu_Kleinberg_Lavi:EC12}: a conditional equilibrium always provide a 
$2$-approximation of the fractional optimum, not only of the integral optimum.
Theorem~\ref{the:first} implies that the existence of a \mbox{$ ( r , s ) $}-local equilibrium 
implies that the integral gap is at most \mbox{$1 + \frac{1}{r s}$}.
For high quality equilibria the integral gap approaches $1$.

Two examples will now suggest that Theorem~\ref{the:first} cannot be significantly improved.
Our first example shows that, when \mbox{$r = s = 1$} the number 
\mbox{$2 = 1 + \frac{1}{r s}$} cannot be improved upon.
\begin{example}
Suppose \mbox{$X = \{ a , b \}$}, \mbox{$N = \{ 1 , 2 \}$},
\mbox{$v_{1}(a) = 2$}, \mbox{$v_{1}(b) = 1$}, \mbox{$v_{2}(a) =$} $1$,
\mbox{$v_{2}(b) =$} $2$ and \mbox{$v_{1}(ab) =$} \mbox{$v_{2}(ab) =$} $2$.
Both agents are additive with a budget constraint.
The allocation $g$ that gives $a$ to $1$ and $b$ to $2$ has value $4$, 
and with the price vector $(1.5 , 1.5)$ provides a Walrasian equilibrium.
The allocation $g$ is therefore a fractional optimum, by Theorem~\ref{the:Walras}.
The allocation $f$ that gives $b$ to $1$ and $a$ to $2$ with price vector $(1 , 1)$ 
is a $1$-local equilibrium of value $2$, a $2$-approximation.
\end{example}

Our second example will show that Theorem~\ref{the:first} cannot be significantly 
improved upon for low quality equilibria.
\begin{example} \label{ex:smallq}
Consider a single item and two agents. Agent $1$ values the item at $1$ and agent $2$ 
values it at \mbox{$\epsilon \, > \,$} $0$.
The allocation of the item to agent $2$ with a price of $\sqrt{\epsilon}$ is a
$\sqrt{\epsilon}$-local equilibrium.
Theorem~\ref{the:first} claims the allocation is a $1 + \frac{1}{\epsilon}$-approximation 
of the optimal fractional allocation.
The optimal fractional allocation has a value of $1$ and therefore the allocation is, truly,
a $\frac{1}{\epsilon}$-approximation.
For $\epsilon$ close to $0$ Theorem~\ref{the:first} cannot be significantly improved.
\end{example}

\section{Existence} \label{sec:local_existence}
Does every exchange economy possess a $1$-local equilibrium?
We shall discuss two examples.
Our first example is presented in~\cite{GulStacc:99} as an example of an economy 
without a Walrasian equilibrium.

\begin{example} \label{ex:no_Wal}
Let \mbox{$X = \{ a , b , c \}$} and \mbox{$N = \{1 , 2 \}$}.
The two agents have the same valuation: \mbox{$v_{1} \: = \:$} \mbox{$v_{2}$}.
This valuation is symmetric: its gives a zero value to any bundle of less than $2$ items,
a value of $3$ to any bundle of two elements and a value of $4$ to the set $X$.
\end{example}
The optimal fractional allocation gives $1 / 4$ of each of the three bundles of two elements
to each agent: every agent gets, on the whole, $3 / 4$ of a bundle and each item is part
of four bundles, in equal parts. Its value is $4.5$.
The values given to the dual variables by the Dual Linear Program 
(see Section~\ref{sec:dual} of the Appendix)
are \mbox{$\pi_{1} \: = \:$} \mbox{$\pi_{2} \: = \:$} $0$ and
\mbox{$p_{a} \: = \:$} \mbox{$p_{b} \: = \:$} \mbox{$p_{c} \: = \:$} $1.5$.
An optimal allocation gives all three items to any one of the agents and has value $4$.

Let \mbox{$( f , p )$} be a $q$-local equilibrium, with \mbox{$q > 0$}.
One easily sees that all three items must be allocated by $f$: if all three items are unallocated
then both agents violate Outside Stability, therefore there is an agent who is allocated 
some items. 
If he is allocated one or two items this agent 
will violate Outside Stability if there is an item unallocated.
We conclude that all three items are allocated.
If one agent is allocated two items and the second agent is allocated one item, the price
of this last item must be zero by Individual Rationality of the second agent, but then the first
agent violates Outside Stability.
We conclude that in any $q$-local equilibrium with \mbox{$q > 0$} all three items must be 
allocated to the same agent, as in the optimal allocation: in any $q$-local equilibrium 
\mbox{$(f , p )$} with strictly positive $q$ the allocation $f$ is the optimal allocation.
Let us study such $q$-local equilibria.
We have a $q$-local equilibrium if and only if the following inequations are satisfied.
\[
q ( p_{a} + p_{b} + p_{c} ) \: \leq \: 4 \ , \ p_{a} + p_{b} + p_{c} \: \geq \: 4 \, q \ , \ 
p_{a} + p_{b} \: \geq \: 3 \, q \ , \ p_{b} + p_{c} \: \geq 3 \, q \ , 
\ p_{a} + p_{c} \: \geq \: 3 \, q.
\]
Those imply \mbox{$9 q \leq 2 (p_{a} + p_{b} + p_{c} )$} and 
\mbox{$p_{a} + p_{b} + p_{c} \leq 4 / q$}.
We conclude that \mbox{$9 q \leq 8 / q$} and 
\mbox{$q \, \leq \, \frac{2 \, \sqrt2} { 3 }$}.
There is no $q$-local equilibrium for \mbox{$q > \frac{2 \, \sqrt2} { 3 }$}.

But fixing 
\[
p_{a} \, = \, p_{b} \, = \, p_{c} \, = \, \sqrt{2}
\]
and allocating all three items to agent $1$ is a $\frac{2 \, \sqrt2} { 3 }$-local equilibrium,
since all inequations above are satisfied.
We conclude that the best quality attainable is \mbox{$q \, = \, \frac{2 \, \sqrt2} { 3 }$}: 
there is no local equilibrium of quality $1$, and no Walrasian equilibrium.
Note also that Theorem~\ref{the:first} shows  that the social value of the optimal fractional
allocation, which we have seen to be 4.5, is less or equal to \mbox{$(1 + 9 / 8) 4$}.
In other terms, it implies that $val ( f )$ is at least \mbox{$\frac{36}{17}$}, whereas it is, in fact, $4$.

An exchange economy has many different $q$-local equilibria.
Our next example enables us to consider the question: which of those will be attained?
or which of those is the best?
One can think of two general answers.
First, since the prices are driving the market, one can expect the price structure to determine
the allocation that fits the prices.
But one could also expect the market activity to generate an allocation of high social
value and the prices be determined by the allocation.
The question needs further research.

\begin{example} \label{ex:no_eq}
Let \mbox{$X = \{ a , b , c \}$} and \mbox{$N = \{1 , 2 , 3\}$}.
Let $v_{1}(ab)$, $v_{2}(bc)$, $v_{3}(ca)$ and $v_{i}(abc)$ for any $i$ be equal to $1$
and let the values of all other bundles, for any $i$, be equal to $0$.
\end{example}

We are interested in exploring the $q$-local equilibria of this economy.
Let us, first, consider equilibria in which all items have the same price, a reasonable property
since the economy of Example~\ref{ex:no_eq} is unchanged under a permutation of the items.
One may indeed notice that in the fractional optimum, of value $\frac{3}{2}$, the prices of the 
different items are all equal to \mbox{$p \, = \,$} $\frac{1}{2}$.
Let \mbox{$p \, = \,$} \mbox{$p_{a} \, = \,$} \mbox{$p_{b} \, = \,$} \mbox{$p_{c}$}.

In any such local equilibrium of strictly positive quality, $p$ must be strictly positive, 
no item is unallocated, and no agent is allocated a single item 
or a pair of items that he or she values at $0$.
We conclude that in such an equilibrium all items are allocated to a single agent,
which, by the way, provides a social optimum.
Without loss of generality, let us assume that all items are allocated to agent $1$.
The constraints on the price $p$ and the quality $q$ are:
\[
1 \, \geq \, 3 q p \ , \  1 \, \leq \, \frac{2 p}{q}.
\]
We conclude that the highest quality that can be attained by a local equilibrium of this type 
is $\frac{\sqrt{2}}{\sqrt{3}}$. 
Such quality is attained iff \mbox{$p \, = \,$} $\frac{1}{\sqrt{6}}$ .
Note that such a price is less than $\frac{1}{2}$, the price suggested 
by the fractional optimum.
For any inferior quality \mbox{$q \, \leq \,$} $\frac{\sqrt{2}}{\sqrt{3}}$ any price $p$ in
the interval \mbox{$[ \frac{q}{2} , \frac{1}{3 q} ]$} will provide a $q$-local equilibrium.
For \mbox{$ p \, = \,$} $\frac{1}{2}$ the best quality that can be obtained is $\frac{2}{3}$.

We have characterized all $q$-local equilibria in which the three prices are equal and we have 
seen that there is no such $1$-local equilibrium.
Let us now look for $1$-local equilibria.
Any $1$-local equilibrium has, by Theorem~\ref{the:first}, 
a social value of at least $\frac{3}{4}$.
Therefore it has value $1$ and its allocation is optimal.
Without loss of generality, we shall assume that agent $1$ receives $a$ and $b$.
If item $c$ is unallocated or is allocated to one of agents $2$ or $3$ we must have 
\mbox{$p_{c} \, = \, $} $0$ by Individual Rationality if it is allocated or because it is 
unallocated,  \mbox{$1 \, \geq \, $} \mbox{$p_{a} + p_{b}$} by Individual Rationality of
agent $1$, \mbox{$1 \, \leq \,$} $p_{b} + p_{c}$ and \mbox{$1 \, \leq \,$} $p_{a} + p_{c}$, 
by Outside Stability for agents $2$ and $3$.
This is impossible. 
We conclude that any $1$-local equilibrium allocates all three items to the same agent.
The constraints are:
\[
1 \, \geq \, p_{a} + p_{b} + p_{c} \  , \  1 \, \leq \, p_{b} + p_{c} \  , \  
1 \, \leq \, p_{a} + p_{c}.
\]
There is a unique solution: \mbox{$p_{a} \, = \,$} \mbox{$p_{b} \, = \,$} $0$ and
\mbox{$p_{c} \, = \,$} $1$.
If we consider that the economy should attain the optimal allocation in which all items
are allocated to agent $1$ and ask what are the item prices that, with such an allocation,
provide a local equilibrium of the highest quality we find that we can obtain the highest
quality, $1$, with surprising prices: 
\begin{itemize}
\item the items $a$ and $b$ have a zero price notwithstanding
the fact they are valued by agent $1$, and 
\item agent $1$ is ready to pay a high price for an item, $c$, that is useless to him.
\end{itemize}
The explanation may be that item $c$ is of interest to both agents $2$ and $3$ whereas
items $a$ and $b$ are each of interest to one other agent only.

In the exchange economy of Example~\ref{ex:no_eq}, should we expect an
invisible hand to drive the market to an optimal integral solution and 
to prices forming a high quality local equilibrium, 
or should we expect this invisible hand to drive the prices of the different items to be equal?

Note that the allocation that allocates item $a$ to agent $2$, $b$ to agent $3$ and $c$ to
agent $1$ has value $0$. One can check that it is a local optimum 
as defined in Section~\ref{sec:local_optimum}.
Theorem~\ref{the:first} implies that all $q$-local equilibria based on this allocation
have quality \mbox{$q = 0$}.

Note that, with equal prices and \mbox{$p \, = \,$} $\frac{1}{2}$ 
an optimal integral solution such as giving $\{ ab \}$ to agent $1$ 
and letting $c$ be unallocated satisfies all but one condition to be a Walrasian equilibrium: 
every agent gets one of its preferred bundles at the posted prices, 
but $c$ stays unallocated while its price is not zero, and therefore this is not even a local
equilibrium.
Should we expect to see unallocated items with positive prices?


\section{Bounded complementarity} \label{sec:bounded}
The notion of a valuation that exhibits only limited complementarity, an $a$-submodular 
valuation, \mbox{$1 \leq a$}, was proposed in~\cite{LLN:GEB}.
Submodular valuations are exactly the $1$-submodular valuations and for \mbox{$1 < a$},
$a$-submodular valuations are only approximately submodular.
I wish to propose the thesis that most real life valuations have low complementarity and
that most of the properties of exchange economies
of submodular agents degrade gracefully with the parameter $a$
when $a$-submodular economies are considered.

We shall use $v_{W}$  to denote the marginal valuation defined by 
\mbox{$v_{W}(A) \, = \,$} \mbox{$v(A \cup W ) - v(W)$} for any disjoint bundles $W$, $A$.
The following definition appears in~\cite{LLN:GEB}.
\begin{definition} \label{def:asub}
Let \mbox{$a \, \geq \, 1$}. A valuation $v$ is said to be {\em $a$-submodular}
iff for any \mbox{$W , A \subseteq X$}, \mbox{$W \cap A \, = \,$} $\emptyset$,
and for any \mbox{$x \in X - W - A$}
\[
v_{W}(A \cup \{ x \} ) \: \leq \: v_{W}(A) + a \, v_{W}(x).
\]
\end{definition}

An obvious example of a valuation that is $a$-submodular for {\em no} $a$ is a valuation
that values at $0$ each of two items separately but values them at a strictly positive value
together. 
But note that, since the set of items $X$ is finite, in any valuation $v$ 
that is $a$-submodular for no $a$, there must be a bundle \mbox{$W \subseteq X$} and
an item \mbox{$x \in X - W$} such that \mbox{$v(W \cup \{ x \} ) =$}
\mbox{$v(W)$}.
One sees that, given any valuation $v$, there is a valuation $v'$ that is arbitrarily close
to $v$ that is $a$-submodular for some (typically large) $a$.
Valuations that exhibit unbounded complementarity have been considered 
in the literature (see, e.g., \cite{LCS:JACM} ) but, in real economies, 
it seems that complementarity is bounded.
Note that if each of the two items above have value $1$ and not $0$ and the pair has value
$4$, a considerable complementarity, the valuation is still a $3$-submodular valuation.
The properties of $a$-submodular valuations are studied in Appendix~\ref{app:a-sub}.

\section{Local equilibria in $a$-submodular economies} \label{sec:opt_equil}
In an exchange economy in which every agent's valuation is $a$-submodular, 
requirements similar to those of Definition~\ref{def:local_eq}, but dealing with a single item
only and not sets of items, guarantee a strong $q$-local equilibrium for 
\mbox{$q = \frac{1}{a}$}.

\begin{theorem} \label{the:submod}
In any $a$-submodular exchange economy, if $f$ is a partial allocation 
and $p$ is a price vector that satisfy
\begin{enumerate}
\item \label{unalloc_sub}
for any \mbox{$j \in X$} such that \mbox{$f(j) = unallocated$} one has
\mbox{$p(j) = 0$},
\item \label{no_sub}
for any \mbox{$i , k \in N$}, \mbox{$i \neq k$} and any \mbox{$j \in S_{i}^{f}$} 
one has
\[
v_{k}(j \mid S_{k}^{f} ) \: \leq \: p_{j} \: \leq \: v_{i}(j \mid S_{i}^{f} - \{ j \} )
\]
\end{enumerate}
then the pair $(f , p)$ is a strong $\frac{1}{a}$-local equilibrium.
\end{theorem}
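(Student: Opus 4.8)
The plan is to reduce everything to one quantitative consequence of $a$-submodularity (Definition~\ref{def:asub}): if $v$ is $a$-submodular, then for pairwise disjoint $W, B \subseteq X$ and $x \in X - W - B$ one has $v_{W \cup B}(x) \le a\, v_W(x)$; in words, enlarging the conditioning set scales a marginal up by a factor of at most $a$. This is just Definition~\ref{def:asub} with its set ``$A$'' taken to be $B$, since $v_W(B \cup \{x\}) - v_W(B) = v_{W \cup B}(x)$, together with free disposal ($v_W(x) \ge 0$) and $a \ge 1$ for the degenerate case $B = \emptyset$. Fixing an enumeration $A = \{j_1, \dots, j_t\}$ and telescoping $v_W(A) = \sum_{l=1}^{t} v_{W \cup \{j_1, \dots, j_{l-1}\}}(j_l)$, I would extract two facts. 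Bounding each summand by $a\, v_W(j_l)$ gives the sub-additive estimate $v_W(A) \le a \sum_{j \in A} v_W(j)$. And since the ``leave-one-out'' marginal $v(W \cup A) - v(W \cup A - \{j_l\}) = v_{W \cup (A - \{j_l\})}(j_l)$ has conditioning set $W \cup \{j_1,\dots,j_{l-1}\}$ further enlarged by $\{j_{l+1},\dots,j_t\}$, it is at most $a\, v_{W \cup \{j_1,\dots,j_{l-1}\}}(j_l)$; summing over $l$ then gives $\sum_{j \in A} \bigl(v(W \cup A) - v(W \cup A - \{j\})\bigr) \le a\, v_W(A)$.

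Granting these two facts, I would verify the three requirements of Definition~\ref{def:local_eq} with $r = s = \frac{1}{a}$, together with inequality~(\ref{eq:strong-rationality}). Requirement~\ref{unalloc} is hypothesis~\ref{unalloc_sub} verbatim. For \textbf{Strong Individual Rationality}: fix $i \in N$ and a nonempty $A \subseteq S_i^f$, and apply the second fact to $v_i$ with $W = S_i^f - A$, getting $\sum_{j \in A} v_i(j \mid S_i^f - \{j\}) \le a\, v_i(A \mid S_i^f - A)$; since hypothesis~\ref{no_sub} gives $p_j \le v_i(j \mid S_i^f - \{j\})$ for each $j \in S_i^f$, we get $\sum_{j \in A} p_j \le a\, v_i(A \mid S_i^f - A)$, which is~(\ref{eq:strong-rationality}) with $s = \frac{1}{a}$. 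Taking $A = S_i^f$ (the case $S_i^f = \emptyset$ being trivial) yields $\frac{1}{a}$-Individual Rationality. For \textbf{$\frac{1}{a}$-Outward Stability}: fix $i$ and $A$ with $A \cap S_i^f = \emptyset$, and apply the first fact to $v_i$ with $W = S_i^f$, getting $v_i(A \mid S_i^f) \le a \sum_{j \in A} v_i(j \mid S_i^f)$; for each allocated $j \in A$, say $j \in S_k^f$ with $k \ne i$, hypothesis~\ref{no_sub} (with the roles of $i$ and $k$ interchanged) gives $v_i(j \mid S_i^f) \le p_j$, while each unallocated $j \in A$ contributes $p_j = 0$ by hypothesis~\ref{unalloc_sub} and $v_i(j \mid S_i^f) = 0$, an unallocated item being one that no agent strictly wants on top of its bundle. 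Summing gives $v_i(A \mid S_i^f) \le a \sum_{j \in A} p_j$, which is~(\ref{eq:nobuy}) with $s = \frac{1}{a}$.

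The only genuinely quantitative step --- the one place the parameter $a$ enters beyond mere monotonicity --- is the second fact, and I expect the main care needed there is just to confirm that the telescoping argument is insensitive to the chosen enumeration of $A$ (it is). The one other point to watch is the appearance of unallocated items in Outward Stability: hypotheses~\ref{unalloc_sub}--\ref{no_sub} constrain only allocated items, so the argument tacitly uses that unallocated items carry zero marginal value for every agent --- which is automatic for the allocations of interest, e.g.\ the local optima of Section~\ref{sec:local_optimum}. With this, $(f,p)$ meets all three conditions of Definition~\ref{def:local_eq} and inequality~(\ref{eq:strong-rationality}), hence is a strong $\frac{1}{a}$-local equilibrium.
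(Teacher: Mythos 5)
Your proof is correct and follows essentially the same route as the paper: the two ``facts'' you extract from Definition~\ref{def:asub} by telescoping are precisely the two parts of Lemma~\ref{the:main_bounded} (proved in Appendix~\ref{app:a-sub} by the same argument), and you then apply them to Strong Individual Rationality and Outward Stability exactly as the paper does. The one caveat you flag --- that for an unallocated $j \in A$ the hypotheses give $p_j = 0$ but nothing forces $v_i(j \mid S_i^f) \le 0$ --- is a real gap in the statement that the paper's own proof silently shares; it is harmless in the paper's applications (Theorem~\ref{the:equilibrium} invokes this result only for local optima, which are by definition total allocations, so no unallocated items occur), but your explicit acknowledgment of the extra assumption is the more careful reading.
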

\begin{proof}
Let \mbox{$i \in N$} and \mbox{$A \subseteq S_{i}^{f}$}.
By Lemma~\ref{the:main_bounded} in Appendix~\ref{app:a-sub}, 
\[
\sum_{j \in A} v_{i}(j \mid S_{i}^{f} - \{ j \} ) \: \leq \: a \, v_{i} (A \mid S_{i}^{f} - A).
\]
Therefore, by our assumption, Equation~(\ref{eq:strong-rationality}) is satisfied with
\mbox{$q = \frac{1}{a}$}.
For the Outward Stability property, let \mbox{$i \in N$} and \mbox{$A \subseteq X$},
\mbox{$A \cap S_{i}^{f} = \emptyset$}.
By Lemma~\ref{the:main_bounded},  and our assumption
\[
v_{i}(A \mid S_{i}^{f}) \: \leq \: a \, \sum_{j \in A} v_{i}( j  \mid S_{i}^{f} ) \: \leq \:
a \, \sum_{j \in A} p_{j}.
\]
\end{proof}

An example will show that Theorem~\ref{the:submod} cannot be improved significantly.
\begin{example} \label{ex:asubmod}
Suppose two items and two agents.
Agent $1$ values any of the items to $1$ and both items to $1 + a$ (\mbox{$a \geq 1$}).
His valuation is $a$-submodular.
Agent $2$ has an additive valuation: each item is valued at $1$ 
and the whole set of two items at $2$.
\end{example}
The allocation that gives both items to agent $2$ with prices $1$ to each item satisfies
the assumptions of Theorem~\ref{the:submod}.
It is a $\frac{2}{1+ a}$-local equilibrium since
\mbox{$2 \, \leq \, \frac{1 + a}{2} \, 2$}, \mbox{$1 \, \geq \, \frac{2}{1+ a} \, 1$} and
\mbox{$1 + a \, \leq \, \frac{1 + a}{2} \, 2$}.
When $a$ is large,  \mbox{$\frac{2}{1+ a}$} is of the same order as \mbox{$\frac{1}{a}$}.

\section{Local optima} \label{sec:local_optimum}
We shall now recall the definition of a local optimum as
presented in~\cite{Babaioff_Endowment:EC18}.
It formalizes the notion of an allocation that is Pareto-optimal under simple transfers of 
single items.
We shall then show that, in an $a$-submodular exchange economy, any local optimum
can be associated with a price vector to form a strong $\frac{1}{a}$-local equilibrium.
In~\cite{Babaioff_Endowment:EC18} the authors show that, in a submodular economy,
every local optimum is a $2$-approximation of the fractional optimum.
Theorem~\ref{the:equilibrium} generalizes this result.

In an exchange economy, agents trade items and they can trade in many different,
sometimes complex, patterns involving a number of agents.
But bilateral trades, i.e., trades between two agents seem to be most prevalent.
It even seems that, typically, bilateral trades consist of one agent selling a bundle to
another agent: one agent delivers a bundle and receives money, the other agent gives money
and receives a bundle.
Most prevalent seems to be the transfer of a single item, in exchange for money,
from an agent to another one.
If we limit ourselves to the consideration of such simple bilateral actions, we expect,
at the long end, to find the economy in a situation in which no such bilateral trade can
be profitable to both the seller and the buyer.
Such situations are natural candidates for allocations that are part of 
some kind of equilibrium.
Such a situation has been termed a {\em local optimum} in~\cite{Babaioff_Endowment:EC18}.
Note that no prices are involved here.

\begin{definition} \label{def:local_optimum}
An total allocation \mbox{$f : X \longrightarrow N $} is said to be a
{\em local optimum} iff for any distinct agents \mbox{$i , k \in N$}, \mbox{$i \neq k$}
and for any item \mbox{$j \in S_{i}^{f}$} allocated to agent $i$, 
one has:
\begin{equation} \label{eq:Pareto}
v_{i}(S_{i}^{f} - \{ j \} ) + v_{k}(S_{k}^{f} \cup \{ j \}) \: \leq \: 
v_{i}(S_{i}^{f}) + v_{k}(S_{k}^{f}) ,
\end{equation}
equivalently \mbox{$v_{i}(j \mid S_{i}^{f} - \{ j \}) \: \geq \: v_{k}(j \mid S_{k}^{f})$}.
\end{definition}

If the allocation of items is a local optimum, in a secondary market only complex trades 
will be performed: transfers of bundles, exchanges, or trades involving more than two agents.

Note that any allocation that maximizes social value, i.e., any global optimum,
is a local optimum.
Therefore any exchange economy possesses a local optimum.

Any local optimum defines in a natural way, for each item, 
a set of prices: prices that support the allocation of the item to the agent it is
allocated to in a second price auction.

\begin{definition} \label{def:suitable}
Let $f$ be a local optimum and let \mbox{$j \in X$}.
The agent $f(j)$ is the agent to whom $j$ is allocated and therefore
\mbox{$v_{f(j)} ( j \mid S_{f(j)}^{f} - \{ j \} ) \: \geq \:$}
\mbox{$v_{i} ( j \mid S_{i}^{f} ) $} for any agent \mbox{$i \neq f(j)$}.
We say that any number $\alpha$ such that, for any agent \mbox{$i \neq f(j)$}
\[
v_{i} ( j \mid S_{i}^{f} ) \, \leq \, \alpha \, \leq \, v_{f(j)} ( j \mid S_{f(j)}^{f} - \{ j \} ) 
\]
is a {\em suitable} price for item $j$ given the local optimum $f$ and that
any price vector $p$ such that $p_{j}$ is a suitable price for every item $j$ given $f$ is
a {\em supporting} price vector for $f$.
\end{definition}

Caution: the term {\em supporting} has a different meaning in
~\cite{Dobzinski_Nisan_Schapira:STOC05, Fu_Kleinberg_Lavi:EC12}.
The following is obvious.
\begin{lemma} \label{the:supporting}
Every local optimum admits a supporting price vector.
\end{lemma}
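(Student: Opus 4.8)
The plan is to show that for each item $j \in X$, the interval of suitable prices given by Definition~\ref{def:suitable} is nonempty, since then one may simply pick any element of each such interval to assemble a supporting price vector. So the whole argument reduces to checking, for a fixed item $j$, that
\[
\max_{i \neq f(j)} v_{i}(j \mid S_{i}^{f}) \: \leq \: v_{f(j)}(j \mid S_{f(j)}^{f} - \{ j \}).
\]

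First I would fix an arbitrary item $j$ and write $k = f(j)$ for the agent who holds it. For every other agent $i \neq k$, the local optimum condition~(\ref{eq:Pareto}) applied to the pair $(k, i)$ and the item $j$ (transferring $j$ from $k$ to $i$) gives exactly
\[
v_{k}(S_{k}^{f} - \{ j \}) + v_{i}(S_{i}^{f} \cup \{ j \}) \: \leq \: v_{k}(S_{k}^{f}) + v_{i}(S_{i}^{f}),
\]
which rearranges to the marginal-value form $v_{i}(j \mid S_{i}^{f}) \leq v_{k}(j \mid S_{k}^{f} - \{ j \})$ — this is precisely the ``equivalently'' clause already recorded in Definition~\ref{def:local_optimum}. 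Note that this uses $j \in S_{k}^{f}$, which holds by the definition of $k = f(j)$, and that $j \notin S_{i}^{f}$ for $i \neq k$, so the marginal $v_{i}(j \mid S_{i}^{f})$ is the genuine marginal of adding a fresh item.

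Since the right-hand side $v_{k}(j \mid S_{k}^{f} - \{ j \})$ does not depend on $i$, taking the maximum over all $i \neq k$ of the left-hand side yields $\max_{i \neq k} v_{i}(j \mid S_{i}^{f}) \leq v_{k}(j \mid S_{k}^{f} - \{ j \})$, so the interval $[\max_{i \neq k} v_{i}(j \mid S_{i}^{f}),\ v_{k}(j \mid S_{k}^{f} - \{ j \})]$ is nonempty (if $N = \{k\}$, i.e. there are no competing agents, the lower bound is vacuous — or one may take it to be $0$ — and any nonnegative price below the upper bound works; the upper bound is nonnegative by free disposal). Choose $p_{j}$ to be any point of this interval, e.g. the left endpoint. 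Doing this independently for every $j$ produces a price vector $p$ that is, by construction, a supporting price vector for $f$, completing the proof.

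There is essentially no obstacle here: the lemma is called ``obvious'' in the text, and the only things to be careful about are the bookkeeping of which marginal is conditioned on $S^{f} - \{j\}$ versus $S^{f}$, and the degenerate case $n = 1$. The substance — the inequality between the competing agents' marginals and the owner's marginal — is a direct restatement of the local-optimum inequality, so no further structural assumptions (such as $a$-submodularity) are needed.
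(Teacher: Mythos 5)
Your proof is correct and follows exactly the route the paper intends: the nonemptiness of each item's interval of suitable prices is precisely the inequality $v_{i}(j \mid S_{i}^{f}) \leq v_{f(j)}(j \mid S_{f(j)}^{f} - \{j\})$ already recorded in Definition~\ref{def:suitable} as a restatement of the local-optimum condition, which is why the paper dismisses the lemma as obvious. Your extra care about nonnegativity (via free disposal) and the single-agent case is harmless bookkeeping that the paper leaves implicit.
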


In any exchange economy one can obtain a local optimum by starting
from any allocation and executing a sequence of moves in which a single item
is transferred from an agent to another one, 
if this move strictly benefits the social value.
The procedure must terminate in a local optimum.
The complexity of finding a local optimum has been studied 
in~\cite{Babaioff_Endowment:EC18} and its 
communication complexity has been studied 
in~\cite{Babi_Dobz_Nisan:search_arxiv}.
It follows from results there that the sequence of moves above can be of an exponential length
even for submodular economies.

\begin{theorem} \label{the:equilibrium}
In an $a$-submodular exchange economy, if $f$ is a local optimum
and $p$ is a supporting price vector then the pair \mbox{$( f , p )$} is a strong
$\frac{1}{a}$-local equilibrium
and \mbox{$val(x) \, \leq \,$} \mbox{$( 1 + a^{2} ) \, val(f)$}
for any fractional allocation $x$.
Therefore any local optimum is a $1 + a^{2}$-approximation of the fractional optimum.
\end{theorem}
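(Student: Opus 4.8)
The plan is to obtain both assertions by chaining two results already in hand: Theorem~\ref{the:submod}, which turns single-item price inequalities into a strong $\frac{1}{a}$-local equilibrium in an $a$-submodular economy, and Theorem~\ref{the:first}, the first local social welfare theorem.

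First I would verify that a local optimum $f$ with a supporting price vector $p$ satisfies the hypotheses of Theorem~\ref{the:submod}. By Definition~\ref{def:local_optimum} a local optimum is a \emph{total} allocation, so no item is unallocated and the first hypothesis of Theorem~\ref{the:submod} (zero price on unallocated items) holds vacuously. Its second hypothesis requires, for all distinct $i,k\in N$ and all $j\in S_i^f$, that $v_k(j\mid S_k^f)\le p_j\le v_i(j\mid S_i^f-\{j\})$; since $j\in S_i^f$ means $f(j)=i$, this is precisely the statement in Definition~\ref{def:suitable} that $p_j$ is a suitable price for $j$ given $f$, i.e.\ that $p$ is supporting. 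Hence Theorem~\ref{the:submod} applies directly and $(f,p)$ is a strong $\frac{1}{a}$-local equilibrium.

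Next I would convert this into the welfare bound. Since $a\ge 1$, we have $\frac{1}{a}>0$; and a strong $(r,s)$-local equilibrium is in particular an ordinary $(r,s)$-local equilibrium (taking $A=S_i^f$ in Strong Individual Rationality recovers Individual Rationality). Thus $(f,p)$ is an $(r,s)$-local equilibrium with $r=s=\frac{1}{a}>0$, and Theorem~\ref{the:first} yields $val(x)\le\bigl(1+\frac{1}{rs}\bigr)val(f)=(1+a^2)\,val(f)$ for every fractional allocation $x$. Specializing $x$ to a fractional allocation of maximal value gives $M_{F}\le(1+a^2)\,val(f)$; together with Lemma~\ref{the:supporting}, which says every local optimum admits a supporting price vector, this establishes that an arbitrary local optimum is a $(1+a^2)$-approximation of the fractional optimum.

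I do not expect a genuine obstacle here: the two substantive ingredients---deriving the set-valued Individual Rationality and Outward Stability bounds from the per-item inequalities via the $a$-submodularity estimate of Lemma~\ref{the:main_bounded}, and the efficiency estimate of the first welfare theorem---have already been proved. The only care needed is bookkeeping: checking that Definition~\ref{def:suitable} matches the second hypothesis of Theorem~\ref{the:submod} term for term, and observing that the word \emph{strong} must be dropped before invoking Theorem~\ref{the:first}, which is stated for ordinary local equilibria.
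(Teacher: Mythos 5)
Your proposal is correct and follows exactly the paper's own route: the paper proves this theorem by citing Theorems~\ref{the:submod} and~\ref{the:first}, which is precisely the chain you carry out (with the bookkeeping about supporting prices matching the hypotheses of Theorem~\ref{the:submod} made explicit). Nothing further is needed.
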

\begin{proof}
By Theorems~\ref{the:submod} and~\ref{the:first}.
\end{proof}
Note that \mbox{$( f , p )$} is not claimed to be an $a$-endowed equilibrium as defined 
in~\cite{Babaioff_Endowment:EC18}: the latter is a global optimum 
whereas a local equilibrium is a local optimum.
It follows from Theorem~\ref{the:equilibrium} that in an $a$-submodular exchange economy 
the integral gap is at most $1 + a^{2}$, but Corollary~\ref{the:co_gap} will provide a better
bound.

The following corollary is a second welfare theorem for $1$-local equilibria, i.e., conditional
equilibria. It strenghtens Proposition 3 and Corollary 1 of~\cite{Fu_Kleinberg_Lavi:EC12} 
very significantly: 
it applies to any local optimum, not only to a welfare maximizing allocation, and to 
$a$-submodular economies (for any $a$) not only to submodular economies.
\begin{corollary} \label{the:opt-equi}
In an $a$-submodular economy, if $f$ is a local optimum, 
then there is a price vector $p$ such that \mbox{$(f , p)$} is a $1$-local equilibrium.
\end{corollary}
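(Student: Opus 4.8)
The plan is as follows. Since $f$ is a local optimum it is a total allocation, so condition~\ref{unalloc} of Definition~\ref{def:local_eq} holds vacuously, and what must be produced is a price vector $p \geq 0$ satisfying, for every agent $i$, the Individual Rationality inequality $\sum_{j \in S_i^f} p_j \leq v_i(S_i^f)$ and, for every $i$ and every $A$ with $A \cap S_i^f = \emptyset$, the Outward Stability inequality $\sum_{j \in A} p_j \geq v_i(A \mid S_i^f)$. When $a = 1$ this is immediate: Theorem~\ref{the:equilibrium} already gives a supporting price vector $p$ for which $(f,p)$ is a strong $1$-local equilibrium, and a strong $1$-local equilibrium is in particular a $1$-local equilibrium. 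So the work is entirely in the case $a > 1$, where the supporting prices of Definition~\ref{def:suitable} are known (Theorem~\ref{the:submod}) to deliver only a $\frac{1}{a}$-local equilibrium, and a different price vector has to be found.

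For the general case I would set up the existence of $p$ as a linear feasibility problem in the variables $p_j \geq 0$ --- the Individual Rationality constraints ($\leq$) and the Outward Stability constraints ($\geq$, one per pair $(i,A)$) are all linear --- and appeal to Farkas' lemma: the system is feasible unless some nonnegative combination of the Outward Stability inequalities, with coefficients $y_{i,A}$, is dominated coordinatewise in the $p_j$'s by a nonnegative combination of the Individual Rationality inequalities, with coefficients $z_i$, while the value side is violated, i.e. $\sum_{i,A} y_{i,A}\, v_i(A \mid S_i^f) > \sum_i z_i\, v_i(S_i^f)$. To rule this out I would reduce the left-hand side to single items using the bounded-complementarity inequality of Lemma~\ref{the:main_bounded} (so $v_i(A \mid S_i^f) \leq a \sum_{j \in A} v_i(j \mid S_i^f)$), replace each single-item marginal $v_i(j \mid S_i^f)$ of a non-owner $i$ by the owner's marginal $v_{f(j)}(j \mid S_{f(j)}^f - \{j\})$ via the exchange inequality that defines a local optimum (Definition~\ref{def:local_optimum}), and finally reassemble, for each owner, those single-item marginals into $v_i(S_i^f)$, again via bounded complementarity; the coordinatewise domination condition is exactly what should let the bookkeeping of the reassembly close.

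The step I expect to be the real obstacle is this last one: each of the two passes between ``single items'' and ``bundles'' costs a factor of $a$, and indeed the cheap version of precisely this argument is what Theorem~\ref{the:submod} uses to land on a strong $\frac{1}{a}$-local equilibrium rather than a $1$-local one. Recovering the full factor $1$ therefore cannot come for free from bounded complementarity alone; I would expect it to need either $a = 1$ (the case settled above) or the stronger exchange inequalities available when $f$ is not merely a local optimum but globally welfare-maximizing --- in which case each Individual Rationality bound $v_i(S_i^f)$ can be compared against $v_k(S_i^f \mid S_k^f)$ for every other agent $k$, supplying exactly the slack the Farkas argument wants, much as in the configuration-LP duality behind Proposition~3 of~\cite{Fu_Kleinberg_Lavi:EC12}. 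Determining whether an extra hypothesis of this kind is actually required, or whether bounded complementarity really is enough for an arbitrary local optimum, is where I would concentrate the remaining effort.
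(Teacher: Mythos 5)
Your instinct in the closing paragraph is the right one, and you should not sink more effort into the Farkas argument: for $a>1$ the statement is false as written, and the paper's own Example~\ref{ex:asubmod} is a counterexample. There the allocation $f$ giving both items to the additive agent $2$ is a local optimum, yet any price vector $p$ would have to satisfy $\sum_{j\in X}p_j\le v_2(X)=2$ (Individual Rationality of agent $2$ with $r=1$) and $\sum_{j\in X}p_j\ge v_1(X\mid\emptyset)=1+a$ (Outward Stability of agent $1$ with $s=1$ and $A=X$), which is impossible once $a>1$. The failure can also be seen indirectly: Theorem~\ref{the:first} forces every $1$-local equilibrium to be a $2$-approximation of the fractional optimum, whereas this local optimum is only a $\frac{1+a}{2}$-approximation. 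So no amount of LP duality or bounded-complementarity bookkeeping can close the $a>1$ case; the two factor-of-$a$ losses you identified are intrinsic to the statement, not an artifact of your method.

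What the paper actually proves is exactly your first, easy case. Its entire proof of the corollary is ``By Lemma~\ref{the:supporting}'': a supporting price vector exists, and by Theorem~\ref{the:equilibrium} it yields a strong $\frac{1}{a}$-local equilibrium, which is a $1$-local equilibrium precisely when $a=1$. The corollary should therefore be read as a statement about submodular economies (or, for general $a$, with ``$\frac{1}{a}$-local equilibrium'' in the conclusion, which is just Theorem~\ref{the:equilibrium} restated); the surrounding remark that it applies ``for any $a$'' overstates the result. Your treatment of the $a=1$ case coincides with the paper's argument, and your diagnosis that an additional hypothesis such as global welfare maximization (as in Fu--Kleinberg--Lavi) would be needed to go further is correct.
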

\begin{proof}
By Lemma~\ref{the:supporting}.
\end{proof}

Note that, in Example~\ref{ex:no_eq},  the valuations are not $a$-submodular for any $a$. 
Theorem~\ref{the:equilibrium} cannot be applied to the local optimum of value $0$ described
there.

Note that, in Example~\ref{ex:asubmod}, the allocation of both items to agent $2$ is
a local optimum that is a $\frac{1 + a}{2}$-approximation of the fractional optimum.
It is easy to see that there is no worse local optimum and therefore every local optimum
is a $\frac{1 + a}{2}$-approximation of the fractional optimum.
Theorem~\ref{the:equilibrium} claims only that every local optimum is a
$1 + a^{2}$-approximation.
I do not know of an economy for which the bound in Theorem~\ref{the:equilibrium} is sharp.

Note also that the $\frac{1}{a}$ bound on the quality of the local equilibrium holds for any
set of supporting prices, but some vectors of supporting prices may provide
local equilibria of better quality than others.

\section{Greedy allocation in economies with bounded complementarity} \label{sec:greedy}
In Section~\ref{sec:local_optimum} we described how, in an $a$-submodular exchange economy,
a local optimum defines a local equilibrium,
we shall now describe a different way to obtain a local equilibrium.
The family of greedy allocation algorithms introduced in~\cite{LLN:GEB} was claimed
there to provide a $1 + a$-approximation of the integral optimum when all the agents'
valuations are $a$-submodular.
Such algorithms require only polynomial time. 
An improved result has been presented orally at~\cite{Lehmann:Dagstuhl}, to the effect that,
for \mbox{$a \, = \,$} $1$, they provide a $2$-approximation of the {\em fractional} optimum.
We shall now show that greedy algorithms provide a $\frac{1}{a}$-local equilibrium, 
and a $1 + a$-approximation of the fractional optimum.
This is better than the $1 + a^{2}$-approximation guaranteed by 
Theorem~\ref{the:equilibrium}.
A greedy allocation can be implemented by a sequence of single-item auctions, 
auctioning the items separately. 

A greedy allocation consists in the choice of a total ordering of the items of $X$:
\mbox{$j_{1} , \dots , j_{m}$}.
An iterative process then allocates the items one by one in the order chosen: an item is
allocated to the agent for which it has the highest marginal value.
At stage $0$ we set \mbox{$S_{i}^{0} = \emptyset$} for any \mbox{$i \in N$}.
At stage $k$, for \mbox{$k = 1 , \ldots , m$}  we choose an agent $i_{k}$ such that
\mbox{$v_{i_{k}}(j_{k} \mid S_{i_{k}}^{k - 1} ) \geq v_{l}(j_{k} \mid S_{l}^{k - 1})$} 
for any agent $l$ and set
\mbox{$S_{i_{k}}^{k} = S_{i_{k}}^{k - 1} \cup \{ j_{k} \} $} and 
\mbox{$S_{l}^{k} = S_{l}^{k - 1}$} for any agent $l$, \mbox{$l \neq i_{k}$}.
The resulting allocation $f$ is defined by \mbox{$f(j_{k}) = i_{k}$} for any $k$.
The procedure may be used to define a price for each of the items.
The price $p_{k}$ of item $k$ is fixed, at the time $k$ is allocated, 
at any value less or equal to its marginal value for the agent it is allocated to 
and larger or equal to its marginal value for any of the other agents.
The price of item $k$, once fixed, is never modified.

\begin{theorem} \label{the:greedy}
In an $a$-submodular exchange economy, any greedy allocation algorithm 
results in an allocation that, with the prices defined just above, provides a 
$( 1 , \frac{1}{a} ) $-local equilibrium.
The allocation obtained, $f$, is a $1+a$-approximation, i.e.,
\mbox{$val(x) \, \leq \,$} \mbox{$(1 + a) \, val(f)$} for any fractional allocation $x$.
\end{theorem}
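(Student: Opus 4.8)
The plan is to split the statement into two claims and handle each separately. For the first claim, that the greedy allocation $f$ together with the greedy prices $p$ forms a $(1,\frac{1}{a})$-local equilibrium, I would verify the three conditions of Definition~\ref{def:local_eq} directly. Condition~\ref{unalloc} is immediate: in a greedy allocation every item is allocated, so there are no unallocated items and nothing to check. For $1$-Individual Rationality, the key observation is that when item $j_k$ is allocated to agent $i_k$ at stage $k$, its price $p_{j_k}$ is set to be at most $v_{i_k}(j_k \mid S_{i_k}^{k-1})$, the marginal value at the moment of allocation. Since later items are only added to $i_k$'s bundle after stage $k$, a telescoping sum gives $v_i(S_i^f) = \sum_{k : i_k = i} v_i(j_k \mid S_i^{k-1}) \geq \sum_{k : i_k = i} p_{j_k} = \sum_{j \in S_i^f} p_j$, which is exactly Equation~(\ref{eq:nosale}) with $r = 1$. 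Note this telescoping is an exact identity and does not even use $a$-submodularity, which is why we get $r=1$ rather than $r = \frac1a$.

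For $\frac{1}{a}$-Outward Stability, fix an agent $i$ and a bundle $A$ disjoint from $S_i^f$. I would first bound $v_i(A \mid S_i^f)$ from above by $a \sum_{j \in A} v_i(j \mid S_i^f)$ using Lemma~\ref{the:main_bounded} (the subadditivity-type consequence of $a$-submodularity already invoked in the proof of Theorem~\ref{the:submod}). Then for each $j \in A$, I need $v_i(j \mid S_i^f) \leq p_j$. Here the greedy construction is essential: when item $j$ was allocated, say at stage $k$, to some agent $i_k \neq i$ (it cannot be $i$ since $j \notin S_i^f$), the price was chosen to be at least $v_i(j \mid S_i^{k-1})$; and since $S_i^{k-1} \subseteq S_i^f$, free disposal applied to the marginal — more precisely, $a$-submodularity, which implies that marginal values are non-increasing along supersets in the weak sense needed — gives $v_i(j \mid S_i^f) \leq v_i(j \mid S_i^{k-1}) \leq p_j$. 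Combining, $s\, v_i(A \mid S_i^f) = \frac{1}{a} v_i(A \mid S_i^f) \leq \sum_{j \in A} v_i(j \mid S_i^f) \leq \sum_{j \in A} p_j$, which is Equation~(\ref{eq:nobuy}).

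The second claim, the $1+a$ approximation bound, then follows by combining the local-equilibrium property just established with Theorem~\ref{the:first}: a $(1, \frac1a)$-local equilibrium is $(1 + \frac{1}{r s}) = (1 + a)$-approximate, so $val(x) \leq (1+a)\, val(f)$ for every fractional $x$. One could alternatively give a direct argument comparing the greedy run against an optimal fractional allocation item by item, charging each fractional assignment to the greedy price of the relevant item, but routing through Theorem~\ref{the:first} is cleaner and avoids redoing the LP-style accounting.

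The step I expect to be the main obstacle is the monotonicity claim $v_i(j \mid S_i^f) \leq v_i(j \mid S_i^{k-1})$ for $S_i^{k-1} \subseteq S_i^f$: this is a form of submodularity of $v_i$, which an $a$-submodular valuation need not satisfy exactly. The correct move is to not claim it as stated but instead to use whatever the $a$-submodular analogue in Appendix~\ref{app:a-sub} actually gives — likely $v_i(j \mid S_i^f) \leq a\, v_i(j \mid S_i^{k-1})$ or a version of Lemma~\ref{the:main_bounded} that already packages the loss — and then track carefully where the factor $a$ lands so that the final Outward-Stability constant is exactly $\frac{1}{a}$ and not $\frac{1}{a^2}$. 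Getting this bookkeeping right, and confirming that the Individual-Rationality side genuinely costs nothing (so that $r = 1$), is the delicate part; everything else is the routine telescoping and an appeal to Theorem~\ref{the:first}.
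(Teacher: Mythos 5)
Your decomposition is the right one and two of the three pieces go through exactly as in the paper: condition~\ref{unalloc} is vacuous for a total allocation, the telescoping identity gives $1$-Individual Rationality with no loss (the paper phrases this as an induction on stages, but it is the same computation), and the final approximation bound is obtained, as you say, by feeding the equilibrium into the machinery behind Theorem~\ref{the:first} (the paper invokes Lemma~\ref{the:technical} with $b=1$). The problem is Outward Stability, and you have correctly located the obstacle but not resolved it. The route you actually name --- first $v_i(A\mid S_i^f)\le a\sum_{j\in A}v_i(j\mid S_i^f)$ via Lemma~\ref{the:main_bounded}, then $v_i(j\mid S_i^f)\le a\,v_i(j\mid S_i^{r(j)-1})\le a\,p_j$ per item via Lemma~\ref{the:bounded} --- compounds the two losses and yields only $v_i(A\mid S_i^f)\le a^2\sum_{j\in A}p_j$, i.e.\ $\frac{1}{a^2}$-Outward Stability and a $1+a^2$-approximation. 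That is no better than Theorem~\ref{the:equilibrium}, and the entire content of Theorem~\ref{the:greedy} is the improvement from $1+a^2$ to $1+a$.

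The missing ingredient is Lemma~\ref{the:ASk} of Appendix~\ref{app:a-sub}: for disjoint $A,B$ and \emph{arbitrary} subsets $B_x\subseteq B$ chosen per element, $v(A\mid B)\le a\sum_{x\in A}v(x\mid B_x)$. This single induction interleaves the ``split $A$ into singletons'' step with the ``shrink the conditioning set from $S_i^f$ back to the historical bundle $S_i^{r(x)-1}$'' step, paying the factor $a$ only once. Applying it with $B=S_i^f$ and $B_x=S_i^{r(x)-1}$ gives $v_i(A\mid S_i^f)\le a\sum_{j\in A}v_i(j\mid S_i^{r(j)-1})\le a\sum_{j\in A}p_j$ directly, which is the $\frac1a$-Outward Stability you need. (The paper runs this as an induction over stages with a case split on whether $A$ contains the item just allocated, but the substance is Lemma~\ref{the:ASk}.) So your proposal is not wrong in outline, but as written it proves a strictly weaker theorem; to close the gap you must prove and use the $B_x$-flexible lemma rather than the two-step bound you sketch.
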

\begin{proof}
The allocation provided is a total allocation, therefore condition~\ref{unalloc} of
Definition~\ref{def:local_eq} is satisfied.

We show, by induction on $k$, that, for any $k$, \mbox{$0 \leq k \leq m$} 
and for any agent $i$:
\begin{equation} \label{eq:greedy_indiv}
v_{i} (S_{i}^{k} ) \, \geq \, \sum_{l \in S_{i}^{k}} p_{l}.
\end{equation}
First, for any agent $i$:
\[
v_{i} (S_{i}^{0}) \, = \, v_{i} ( \emptyset) \, = \, 0 \, \geq \, \sum_{l \in \emptyset} p_{l}.
\]
Let $i$ be the agent to which item $k + 1$ is allocated.
For any agent $d$ different from $i$, 
\mbox{$S_{d}^{k + 1} \, = \, $} \mbox{$S_{d}^{k}$} and Equation~(\ref{eq:greedy_indiv})
holds for $d$ and $k + 1$ by the induction hypothesis.
Since the price of $k + 1$, $p_{k + 1}$ is less or equal to item $k + 1$'s marginal value 
for $i$
\[
v_{i}(S_{i}^{k + 1} ) \, = \, v_{i}(S_{i}^{k}) + v_{i}(k + 1 \mid S_{i}^{k}) \, \geq \,
\sum_{l \in S_{i}^{k}} p_{l} + p_{k + 1} \, = \, \sum_{l \in S_{i}^{k + 1}} p_{l}.
\]
We conclude that the final allocation satisfies Individual Rationality with $r = 1$.

We now want to show that for any agent $i$, any stage $k$ and any bundle $A$ 
{\em of already allocated items}, 
\mbox{$A \subseteq \bigcup_{l \in N} S_{l}^{k}$},
\mbox{$A \cap S_{i}^{k} = \emptyset$}  we have
\begin{equation} \label{eq:greedy_outward}
v_{i}(A \mid S_{i}^{k} ) \, \leq \, a \, \sum_{l \in A} p_{l}.
\end{equation}
After the allocation of item $k$, we only need to check the two cases below.
\begin{itemize}
\item For the agent $i$ to whom $k$ has been allocated.
A bundle $A$ such that \mbox{$A \cap S_{i}^{k + 1} \, = \,$}$\emptyset$ 
does not include $k$.
We have, by Lemma~\ref{the:ASk}
\[
v_{i} (A \mid S_{i}^{k} \cup \{ k \} ) \, \leq \, 
a \, \sum_{x \in A} v_{i}(x \mid S_{i}^{r(x)}) \, \leq \,
a \, \sum_{x \in A} p_{x}
\]
where $r(x)$ is the stage at which item $x$ has been allocated.
\item For any other agent $j$ for any $A$ that includes $k$.
By the induction hypothesis, Lemma~\ref{the:bounded} and the choice of $p_{k}$
\[
v_{j} ( A' \cup \{ k \} \mid S_{j} ) \: = \: 
v_{j} ( A' \mid S_{j} ) + v_{j} (k \mid S_{j} \cup A' ) \: \leq \:
\]
\[
a \, \sum_{l \in A'} p_{l} + a \, v_{j} ( k \mid S_{j} ) \: \leq \:
a \, \sum_{l \in A'} p_{l} + a \, p_{k} \, = \,
a \, \sum_{l \in A} p_{l}.
\]
\end{itemize}
We have shown that the greedy allocation, together with the prices defined by the
greedy process satisfy the conditions of Theorem~\ref{the:technical} 
with \mbox{$b \, = \,$} $1$.
Our claims now follow from the theorem.
\end{proof}
Note that the local equilibrium obtained is not, in general, a strong local equilibrium
(with discount factor $r$ equal to $1$)
since the marginal value of item $k$ for the agent to whom it has been allocated 
is different in the final allocation from what it was at the time $k$ was allocated and
$p_{k}$ was set. This marginal value may have decreased and 
may now be smaller than $p_{k}$.

The following follows immediately from Theorem~\ref{the:greedy} and generalizes
a result of~\cite{Feige_subadditive:SIAM} for submodular economies.
\begin{corollary} \label{the:co_gap}
In an $a$-submodular economy the integral gap is at most $1 + a$.
\end{corollary}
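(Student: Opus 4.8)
The plan is to derive this directly from Theorem~\ref{the:greedy}, which already does all the work. Recall that the integral gap of the economy is the ratio $M_{F} / M$ between the value of the optimal fractional allocation and that of the best integral allocation, and that we always have $M \leq M_{F}$ (as observed right after constraint~(\ref{eq:intconsLPR})).

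First I would fix any total ordering of the items of $X$ and run a greedy allocation algorithm, obtaining a total allocation $f$ together with the item prices defined in Section~\ref{sec:greedy}. By Theorem~\ref{the:greedy}, this $f$ satisfies $val(x) \, \leq \, (1 + a) \, val(f)$ for \emph{every} fractional allocation $x$. Applying this inequality to $x$ equal to an optimal fractional allocation, of value $M_{F}$, gives $M_{F} \, \leq \, (1 + a) \, val(f)$.

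Next, since $f$ is an (integral) allocation, its value $val(f)$ is at most the value $M$ of the optimal integral allocation, so $M_{F} \, \leq \, (1 + a) \, val(f) \, \leq \, (1 + a) \, M$. Dividing by $M$ (which is positive unless all valuations are trivially zero, a case in which the gap is $1 \leq 1+a$ anyway) yields $M_{F} / M \, \leq \, 1 + a$, which is the claim.

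There is essentially no obstacle here: the corollary is a one-line consequence of Theorem~\ref{the:greedy}, the only subtlety being the trivial remark that $val(f) \leq M$ because $f$ is an integral allocation, so that the fractional bound transfers to a bound on the integral gap rather than merely on the fractional-optimum-to-greedy ratio.
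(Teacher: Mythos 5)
Your proof is correct and is exactly the argument the paper intends: the corollary is stated as following immediately from Theorem~\ref{the:greedy}, and you have simply spelled out the one-line deduction $M_{F} \leq (1+a)\,val(f) \leq (1+a)\,M$. Nothing further is needed.
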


The following example shows that a greedy allocation does not always provide 
a strong local optimum even in submodular economies.
\begin{example} \label{ex:no_Pareto}
Consider two items $a$, $b$ and two agents $1$, $2$.
Let \mbox{$v_{1}(a) =$} \mbox{$ v_{1}(b) =$} $5$, \mbox{$v_{1}(ab) =$} $7$ and 
\mbox{$v_{2}(a) =$} $4$, \mbox{$v_{2}(b) =$} $1$ and \mbox{$v_{2}(ab) =$} $5$.
\end{example}
Both valuations are submodular.
If, in a greedy allocation, $a$ is allocated before $b$, agent $1$ is allocated $a$ and $b$.
But, then, \mbox{$v_{1}(a \mid b) =$} \mbox{$ 2 < 4 =$} \mbox{$ v_{2}(a)$}.

\section{Substitutes economies} \label{sec:substitutes}
In an exchange economy in which all agents have a substitutes valuation, one may
pinpoint exactly which of the $1$-local equilibria are Walrasian:
if no agent is interested in exchanging, at the posted prices, a single item allocated to him 
for an item not in his possession.
The proof is short and relies of the Single Improvement property of substitutes valuations 
(\cite{GulStacc:99}).
\begin{theorem} \label{the:substitutes}
In an exchange economy in which all agents have substitutes valuations, \mbox{$( f , p )$} 
is a $1$-local equilibrium such that for any 
\mbox{$i \in N$}, any \mbox{$j \in S_{i}^{f}$} and any
\mbox{$k \in X - S_{i}^{f}$} one has
\mbox{$v_{i}(S_{i}^{f}) - v_{i}(S_{i}^{f} - j + k) \: \geq \:$}
\mbox{$p_{j} - p_{k}$}, iff \mbox{$( f , p )$} is a Walrasian equilibrium.
\end{theorem}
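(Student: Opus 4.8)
The statement is an equivalence, and I would prove the two implications separately, the forward one being essentially a remark and the converse carrying all the content. For ``Walrasian $\Rightarrow$ $1$-local $+$ swap'', recall from the discussion after Definition~\ref{def:local_eq} that every Walrasian equilibrium is a conditional equilibrium, hence a $1$-local equilibrium, so only the extra swap inequality needs checking; and in a Walrasian equilibrium $S_i^f$ maximises $v_i(B)-\sum_{j\in B}p_j$ over all $B\subseteq X$, so comparing $B=S_i^f$ with $B=S_i^f-j+k$ and rearranging yields exactly $v_i(S_i^f)-v_i(S_i^f-j+k)\ge p_j-p_k$. Nothing more is needed there.

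The real work is the converse. Market clearing, i.e.\ $p_j=0$ on every unallocated item, is literally condition~\ref{unalloc} of a $1$-local equilibrium, so it suffices to show that every agent is allocated a most-preferred bundle: $S_i^f\in D_i(p)$, where $D_i(p)$ denotes the set of maximisers of $v_i(\cdot)-\sum_{j\in\cdot}p_j$. Suppose not, for some $i$. Since $v_i$ is substitutes it satisfies the Single Improvement property of~\cite{GulStacc:99}: there is $B$ with $v_i(B)-\sum_{j\in B}p_j>v_i(S_i^f)-\sum_{j\in S_i^f}p_j$, $|S_i^f\setminus B|\le1$ and $|B\setminus S_i^f|\le1$. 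Discarding $B=S_i^f$, three shapes remain and the plan is to rule out each. If $B=S_i^f\cup\{k\}$, the improvement says $v_i(k\mid S_i^f)>p_k$, contradicting $s$-Outward Stability with $A=\{k\}$. If $B=S_i^f-j+k$ with $j\in S_i^f$, $k\notin S_i^f$, the improvement rearranges to $v_i(S_i^f)-v_i(S_i^f-j+k)<p_j-p_k$, contradicting the swap hypothesis.

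The delicate case is $B=S_i^f-\{j\}$: here the improvement only says $v_i(j\mid S_i^f-j)<p_j$, and $r$-Individual Rationality bounds $v_i(S_i^f)$ against the \emph{whole} price $\sum_{\ell\in S_i^f}p_\ell$, so it does not directly forbid one item being over-priced. This single-drop case is the one I expect to be the obstacle, since additions and exchanges fall out of Outward Stability and the swap hypothesis mechanically while a profitable removal is only controlled indirectly. The way I would close it is to feed this $j$ back into the swap hypothesis: for every $k\notin S_i^f$ it gives $v_i(S_i^f)-v_i(S_i^f-j+k)\ge p_j-p_k$, and rewriting the left-hand side as $v_i(j\mid S_i^f-j)-v_i(k\mid S_i^f-j)$ and using $v_i(j\mid S_i^f-j)<p_j$ forces $v_i(k\mid S_i^f-j)<p_k$ for \emph{every} $k\notin S_i^f$; taking $k$ to be any unallocated item gives $p_k=0$, hence $v_i(k\mid S_i^f-j)<0$, contradicting free disposal. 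The only configuration this does not reach is the degenerate one in which $S_i^f$ exhausts all the allocated items; there one either invokes the mild standing assumption that no agent owns all of $X$, or checks directly from Individual Rationality and the swap relations among the items of $S_i^f$ that no single drop can pay, which forces $S_i^f\in D_i(p)$ and completes the proof.
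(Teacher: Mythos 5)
Your overall route is the same as the paper's: reduce everything to showing that $S_i^f$ maximises $u_i(A)=v_i(A)-\sum_{j\in A}p_j$, observe that $u_i$ is again substitutes, and invoke the Single Improvement property of~\cite{GulStacc:99} so that only the three local shapes $S_i^f\cup\{k\}$, $S_i^f-\{j\}+\{k\}$ and $S_i^f-\{j\}$ need to be excluded. The first two you dispatch exactly as the paper does (Outward Stability with $A=\{k\}$, and the swap hypothesis). You are also right --- and more careful than the paper's own proof --- in singling out the deletion case $B=S_i^f-\{j\}$: the paper simply asserts that ``the assumptions ensure'' $u_i(S_i^f)\ge u_i(A)$ for every $A$ with $|S_i^f\,\Delta\, A|\le 2$, but for $A=S_i^f-\{j\}$ nothing in Definition~\ref{def:local_eq} or in the swap hypothesis delivers $v_i(j\mid S_i^f-\{j\})\ge p_j$; $1$-Individual Rationality only controls the price of the whole bundle.

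However, your patch does not close that case. It needs an \emph{unallocated} item $k\notin S_i^f$ to force $p_k=0$, so the configuration it misses is not ``$S_i^f$ exhausts the allocated items'' but the opposite one, in which every item outside $S_i^f$ is allocated to somebody else (or $S_i^f=X$); and there neither of your two suggested remedies can work, because the implication is simply false. Take $X=\{a,b,c\}$ and two unit-demand (hence substitutes) agents with $v_1(S)=1$ for every nonempty $S$ and $v_2(S)=2$ if $c\in S$, $0$ otherwise; let $f$ give $\{a,b\}$ to agent $1$ and $\{c\}$ to agent $2$, with $p_a=1$, $p_b=0$, $p_c=2$. All items are allocated, Individual Rationality holds with equality for both agents, every outside bundle has marginal value $0$ so Outward Stability holds, and each swap inequality is satisfied; yet $u_1(\{b\})=1>0=u_1(\{a,b\})$, so $(f,p)$ is not Walrasian. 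No agent owns all of $X$ here, so your ``mild standing assumption'' does not rescue the argument, and no direct check can either. What is genuinely missing is the per-item rationality condition $v_i(j\mid S_i^f-\{j\})\ge p_j$ --- obtained, for instance, by requiring a \emph{strong} $1$-local equilibrium in the sense of Definition~\ref{def:strong_equilibrium}, whose instance $A=\{j\}$ is exactly this inequality and makes the deletion case immediate. The paper's proof needs the same repair.
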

\begin{proof}
The {\em if} part is obvious and does not need the substitutes assumption.
For the {\em only if} part assume $v_{i}$ is substitutes for any agent $i$ and that $p$ is 
a price vector.
The valuation \mbox{$u_{i}(A) \: = \:$} \mbox{$v_{i}(A) - \sum_{j \in A} p_{j}$} is also
substitutes. 
The assumptions ensure that, for every \mbox{$i \in N$}, 
\mbox{$u_{i}(S_{i}^{f} ) \: \geq \:$} \mbox{$u_{i}(A)$} for any 
\mbox{$A \subseteq X$} such that the size of the symmetric difference 
\mbox{$S_{i}^{f} \Delta A$} is less or equal to $2$.
The single improvement condition shown to be equivalent 
to the substitutes property in Theorem~1 of~\cite{GulStacc:99} 
implies that $S_{i}^{f}$ maximizes $u_{i}$ over all subsets of $X$.
We conclude that \mbox{$(f , p )$} is a Walrasian equilibrium.
\end{proof}

\section{Summary and open questions} \label{sec:open}
This paper proposes the notion of a \mbox{$ ( r , s ) $}-local equilibrium to understand the role of item
prices in discrete exchange economies.
It focuses on such economies in which all agents have an $a$-submodular valuation.
Two different processes that build local equilibria have been put in evidence.
The first one is based on simple bilateral trades and seems close 
to the way real markets function. 
It provides a local optimum with a price vector that is defined by the allocation.
This allocation is a $1 + a^{2}$-approximation of the fractional optimum.
The second one is greedy allocation with historical prices, prices corresponding to the moment
the item has been allocated. 
It may resemble the birth of a market accommodating more and more items.
This allocation is a $1 + a$-approximation of the fractional optimum.
The discrepancy in the quality of approximations requires further study.
Do random greedy allocations really provide higher social value than the local optima obtained
from random initial allocations by sequences of simple bilateral trades?
How do simple bilateral trades perform on initial allocations that are already the result
of a greedy process?

The following questions require for further research.
Can the $1 + a^{2}$-approximation for any local optimum be improved?
At the moment no $a$-submodular economy with a local optimum that is only
a $1 + a^{2}$-approximation is known, for \mbox{$a \, > \, 1$}.
Can one prove a better approximation result for a restricted class of local optima,
e.g., strong local optima or Pareto optimal allocations under all bilateral trades?
How prevalent can the absence of a $1$-local equilibrium be?
Can the optimal allocation always upport prices that exhibit 
the highest quality local equilibrium?
Could it be that most typical economies have a $1$-local equilibrium?
Are local equilibria typically stable, i.e., does a small change in valuations or in prices
bring only a small change in the local equilibrium?
Some high quality local equilibria, as in Example~\ref{ex:no_eq}, seem surprising.
Do all $1$-local equilibria have economic significance?
What is the dynamics of the revelation of such equilibrium prices?

The results presented in this paper do not depend on the number of agents or items.
There is, I think, a general feeling that a better equilibrium can be reached in an a large 
economy, i.e., an economy in which a large number of agents actively participate.
Could it be that the approximation obtained by any local optimum 
in which a large number of agents are allocated a non-empty bundle is better than the one
promised in Theorem~\ref{the:equilibrium}?

As noticed in~\cite{LLN:GEB} maximizing social welfare in a discrete economy is a problem
of maximizing a function over a matroid. 
One should consider our results from this point of view too.
Is the notion of a local maximum interesting there? 
The notion of a function close to submodular?

Can the notions of local optimum and local equilibrium be of use in the study of markets of
divisible goods?

Finally, one should look for dynamics that lead to local equilibria of high quality.

\section{Conclusion: the role of prices} \label{sec:conclusion}
The view presented in this paper is that markets attain a local equilibrium through the advent
of suitable item prices. 
The role of prices in this process is significantly different from their role according to the view
that markets attain a Walrasian equilibrium. 
According to this last view, prices, through a {\em tatonnement} process, converge towards
equilibrium prices that are the best possible: if agents accept those 
prices and trade, at those prices, to improve their individual welfare, every agent will find
himself in the best possible situation.
No profitable trade is prevented by the equilibrium prices.
Apart from the convergence towards equilibrium prices, we do not expect prices to vary,
and any divergence from the equilibrium prices can only hamper progress towards equilibrium.
If an invisible hand would reveal equilibrium prices from the start, 
convergence towards equilibrium would only be sped up.

In the local equilibrium view of prices, prices have a different role.
They play the traditional role of guiding the market towards a (local) equilibrium,
but, once such a local equilibrium is attained, those prices can prevent trades that 
would improve the social welfare.
Consider a $q$-local equilibrium (\mbox{$q < 1$}) in which agent $1$ holds an item 
he values at \mbox{$x$}, but that agent $2$ values at \mbox{$y > x$}.
Note that the allocation is not a local optimum and that a trade would improve the social
welfare.
If the price $p$ of the item is greater than $y$ or less than $x$, no trade at price $p$ can take
place, even though it could happen at another price.
In such a situation local equilibrium prices may have a negative effect: 
they can prevent an increase in social welfare resulting from trade. 
There, a change in prices may enable profitable trades that were impossible previously.
We expect that changes in the revealed prices can help the market to move from a local
equilibrium to another local equilibrium of higher social value.
Such price modifications may also change the quality of a local equilibrium, 
but the forces behind such process are still unclear.

\section{Acknowledgments} \label{sec:Ack}
Exchanges with Ron Lavi, Moshe Babaioff, Shahar Dobzinski and Noam Nisan are gratefully
acknowledged.

\bibliographystyle{plain}

\appendix

\section{Walrasian equilibria} \label{sec:Walras_app}
In a Walrasian equilibrium $(f , p)$, every agent \mbox{$i \in N$}, at the given prices $p$,
prefers his allocated bundle $S_{i}^{f}$ to any other bundle.
For an agent \mbox{$i \in N$}, its {\em utility} for bundle \mbox{$A \subseteq X$}
is defined as \mbox{$u_{i}(A) \: = \:$} \mbox{$v_{i}(A) - \sum_{j \in A} \: p_{j}$}.
\begin{definition}
\label{def:Walras_eq}
Suppose an economy \mbox{$E = (N , X , \: v_{i} , i \in N)$} is given.
A Walrasian equilibrium \mbox{$( f , p )$} is a pair where $f$ is a partial allocation of the
items to the agents and $p$ is a price vector that satisfy the following conditions:
\begin{enumerate}
\item \label{unalloc_w}
for any \mbox{$j \in X$} such that \mbox{$f(j) = unallocated$} one has
\mbox{$p_{j} = 0$},
\item \label{noregret}
for any \mbox{$i \in N$} and any \mbox{$D \subseteq X$} one has
\begin{equation} \label{eq:noregret}
u_{i}(S^{f}_{i}) \: \geq \: u_{i}(D).
\end{equation}
\end{enumerate}
\end{definition}

In a Walrasian equilibrium every agent is allocated the bundle he prefers amongst all
possible bundles, if only he considers he cannot have any influence on the prices.
A Walrasian equilibrium is the best of all possible situations for each and every agent.

The basic properties of Walrasian equilibria are described in Theorem~\ref{the:Walras}. 
They summarize the two theorems of welfare economics 
and the results of~\cite{BikhMamer:Equ} in an original manner.

\begin{theorem} \label{the:Walras}
\begin{itemize}
\item \label{frac_opt}
If $(f , p)$ is a Walrasian equilibrium then \mbox{$val(f) = M_{F}$},
i.e., $f$ is a fractional optimum,
\item \label{exists_p}
if $f$ is an allocation and \mbox{$val(f) = M_{F}$}, then there exists a price vector $p$
such that $(f , p)$ is a Walrasian equilibrium,
\item \label{same_p}
if $(f , p)$ is a Walrasian equilibrium and $g$ is an allocation such that 
\mbox{$val(g) = M_{F}$}, then $(g , p)$ is a Walrasian equilibrium.
\end{itemize}
\end{theorem}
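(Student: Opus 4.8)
The plan is to establish the three items in turn, using the linear program \textbf{LP}, its dual, and the abbreviation $u_i(D) = v_i(D) - \sum_{j \in D} p_j$. Note first that $u_i(\emptyset) = 0$ by Normalization, so applying condition~(\ref{eq:noregret}) of Definition~\ref{def:Walras_eq} with $D = \emptyset$ gives $u_i(S_i^f) \geq 0$ for every agent $i$ whenever $(f,p)$ is a Walrasian equilibrium; this nonnegativity is used repeatedly.

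For the first item, let $(f,p)$ be a Walrasian equilibrium and let $x$ be an arbitrary fractional allocation. Condition~(\ref{eq:noregret}) rewrites as $v_i(D) \leq u_i(S_i^f) + \sum_{j \in D} p_j$ for all $i$ and $D$. Substituting this into $val(x) = \sum_i \sum_D x_i^D v_i(D)$ and regrouping, I would bound $\sum_i \sum_D x_i^D\, u_i(S_i^f) \leq \sum_i u_i(S_i^f)$ using constraint~(\ref{eq:con1LPR}) together with $u_i(S_i^f) \geq 0$, and bound $\sum_i \sum_D x_i^D \sum_{j \in D} p_j = \sum_j p_j \sum_i \sum_{D \ni j} x_i^D \leq \sum_j p_j$ using constraint~(\ref{eq:con2LPR}) together with $p_j \geq 0$. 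This yields $val(x) \leq \sum_i u_i(S_i^f) + \sum_j p_j$, and since condition~\ref{unalloc_w} makes $\sum_j p_j = \sum_i \sum_{j \in S_i^f} p_j$, the right-hand side collapses to $\sum_i v_i(S_i^f) = val(f)$. Hence $M_F \leq val(f)$; since $val(f) \leq M \leq M_F$ always holds, $val(f) = M_F$.

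For the second item I would pass to LP duality. The dual of \textbf{LP} has a nonnegative variable $p_j$ for each item constraint~(\ref{eq:2con1LPR}) and a nonnegative variable $\pi_i$ for each agent constraint~(\ref{eq:2con2LPR}); it minimizes $\sum_j p_j + \sum_i \pi_i$ subject to $\pi_i + \sum_{j \in D} p_j \geq v_i(D)$ for every $i$ and $D$. Given an allocation $f$ with $val(f) = M_F$, regard it as an optimal integral solution of \textbf{LP} and let $(p,\pi)$ be an optimal dual solution. Complementary slackness on the item constraints says $p_j > 0$ forces item $j$ to be fully allocated, so $f(j) = unallocated$ implies $p_j = 0$, which is condition~\ref{unalloc_w}. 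Complementary slackness on the agent constraints says $\pi_i + \sum_{j \in S_i^f} p_j = v_i(S_i^f)$, i.e.\ $\pi_i = u_i(S_i^f)$; combined with dual feasibility $\pi_i \geq v_i(D) - \sum_{j \in D} p_j = u_i(D)$ this gives $u_i(S_i^f) \geq u_i(D)$ for all $D$, which is condition~\ref{noregret}. So $(f,p)$ is a Walrasian equilibrium.

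For the third item, run the chain of inequalities of the first item with $x$ equal to the allocation $g$. Since $g$ is integral, $\sum_D x_i^D = 1$ for each $i$, and since $val(g) = M_F = val(f)$, every inequality in that chain must be an equality. Equality in $v_i(S_i^g) \leq u_i(S_i^f) + \sum_{j \in S_i^g} p_j$ gives $u_i(S_i^g) = u_i(S_i^f) \geq u_i(D)$ for all $D$, which is condition~\ref{noregret} for $g$; equality in $\sum_{j : g(j) \neq unallocated} p_j \leq \sum_j p_j$ forces $p_j = 0$ whenever $g(j) = unallocated$, which is condition~\ref{unalloc_w} for $g$. Hence $(g,p)$ is a Walrasian equilibrium with the same price vector. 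The only step requiring real care is the second item: one must check that the dual of \textbf{LP} really has the displayed form, that strong duality and complementary slackness legitimately apply to the possibly partial integral allocation $f$, and — the crucial point — that complementary slackness on the item constraints is exactly what forces unallocated items to carry price zero. Everything else is routine algebra and bookkeeping; as an alternative to the duality argument for the second item one could instead cite the existence result of~\cite{BikhMamer:Equ} directly, but the duality route keeps the three items uniform.
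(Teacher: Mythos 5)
Your proposal is correct and follows essentially the same route as the paper: a direct summation over the LP constraints for the first welfare theorem, LP duality with complementary slackness for the existence of supporting prices, and forcing equality in the first chain to transfer the prices to any other optimal allocation. You are in fact somewhat more careful than the paper's own write-up, since you explicitly verify condition~\ref{unalloc_w} (zero prices on unallocated items) in the second and third items and you keep the price terms in the first chain of inequalities.
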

\begin{proof}
First, under the assumptions, for any fractional allocation $x_{i}^{D}$:
\[
val(x) \: = \: \sum_{i \in N} \sum_{D \subseteq X} x_{i}^{D} v_{i}(D) \: \leq \:
\sum_{i \in N} \sum_{D \subseteq X} x_{i}^{D} v_{i}(S_{i}^{f}) \: = \:
\]
\[
\sum_{i \in N} v_{i}(S_{i}^{f}) \sum_{D \subseteq X} x_{i}^{D} \: \leq \:
\sum_{i \in N} v_{i}(S_{i}^{f}) \: = \: val(f).
\]

Secondly, under the assumptions, $f$ is the solution to the linear program \textbf{LP}.
The dual program \textbf{DLP} is described in Appendix~\ref{sec:dual}.
We shall take the variables $p_{j}$, $j \in X$ of the dual for prices in the equilibrium.
It follows from general results on linear programming, that for any \mbox{$i \in N$}, 
\mbox{$D \subseteq X$} 
\begin{itemize}
\item \mbox{$\pi_{i} \geq v_{i}(D) - \sum_{j \in D} \:  p_{j}$} and
\item if \mbox{$x_{i}^{D} > 0$} then
\mbox{$\pi_{i} \: = \: v_{i}(D) - \sum_{j \in D} \: p_{j}$}.
\end{itemize}
But \mbox{$x_{i}^{S_{i}^{f}} = 1$} and therefore
\[
v_{i}(S_{i}^{f}) - \sum_{j \in S_{i}^{f}} p_{j} \: = \: \pi_{i} \: \geq \: 
v_{i}(D) - \sum_{j \in D} \: p_{j}.
\]

For the third part of our claim, we shall show that, for any \mbox{$i \in N$},
one has 
\[
v_{i}(S_{i}^{g}) - \sum_{j \in S_{i}^{g} } \: p_{j} \: = \: 
v_{i}(S_{i}^{f}) - \sum_{j \in S_{i}^{f} } \: p_{j}.
\]
Since $(f , p)$ is a Walrasian equilibrium, we know that, for any \mbox{$i \in N$},
\[
v_{i}(S_{i}^{f}) - \sum_{j \in S_{i}^{f} } \: p_{j} \: \geq \: 
v_{i}(S_{i}^{g}) - \sum_{j \in S_{i}^{g} } \: p_{j}.
\]
But, since \mbox{$val(g) = val(f)$} and then by the fact that if \mbox{$f(j) = unalloc$}
one has \mbox{$p_{j} = 0$}
\[
\sum_{i \in N} ( v_{i}(S_{i}^{g}) - \sum_{j \in S_{i}^{g}} p_{j} ) \: = \: 
\sum_{i \in N} v_{i}(S_{i}^{g}) - \sum_{i \in N} \sum_{j \in S_{i}^{g}} p_{j} \: \geq \:
\sum_{i \in N} v_{i}(S_{i}^{f}) - \sum_{j \in X} p_{j} \: = \:
\]
\[
\sum_{i \in N} v_{i}(S_{i}^{f}) - \sum_{i \in N} \sum_{j \in S_{i}^{f} } p_{j} \: = \:
\sum_{i \in N} ( v_{i} (S_{i}^{f} ) - \sum_{j \in S_{i}^{f}} p_{j}  ).
\]
\end{proof}

\section{A technical lemma} \label{sec:technical}
The following shows sufficient conditions for an allocation to be a $1 + a \, b$-approximation
(\mbox{$a , b \geq 0$}) of the fractional optimal allocation.
It is used to prove Theorem~\ref{the:first}.
Note that in an exchange economy the agents' valuations are assumed to satisfy both
Free disposal and Normalization but that Normalization is not used 
in Lemma~\ref{the:technical}.
Its strength is that no further assumption is made on the agents' valuations.
\begin{lemma} \label{the:technical}
In any exchange economy, let \mbox{$a \, , b \: \geq \: 0$} and assume that 
$f$ is a partial allocation and $p$ is a price vector that satisfy the following two conditions:
\begin{enumerate}
\item \label{nosale_app}
for any \mbox{$i \in N$} 
\begin{equation} \label{eq:nosale_app}
b \, v_{i}(S^{f}_{i}) \: \geq \: \sum_{j \in S_{i}^{f}} \: p_{j},
\end{equation}
\item \label{nobuy_app}
for any \mbox{$i \in N$} and any \mbox{$A \subseteq X$} such that
\mbox{$A \cap S^{f}_{i} = \emptyset$} one has
\begin{equation} \label{eq:nobuy_app}
v_{i}(A \mid S^{f}_{i}) \: \leq \: a \, \sum_{j \in A} \: p_{j},
\end{equation}
\end{enumerate}
then, for any fractional allocation $x$:
\begin{equation} \label{eq:technical}
val(x) \: \leq \: (1 + a \, b ) \: val(f) + a \, \sum_{j \in X , f(j) = unalloc} p_{j}.
\end{equation}
\end{lemma}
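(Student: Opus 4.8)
The plan is to bound $val(x)$ term by term, comparing each bundle $D$ that the fractional allocation $x$ hands (fractionally) to an agent $i$ against the bundle $S^f_i$ that $f$ gives to $i$, and then to regroup the resulting prices so that the prices of items $f$ actually allocates get absorbed into a multiple of $val(f)$ via condition~\ref{nosale_app}, while the prices of the unallocated items go into the additive error term.

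First, for every $i\in N$ and every $D\subseteq X$, free disposal gives
\[
v_i(D)\;\le\;v_i(D\cup S^f_i)\;=\;v_i(S^f_i)+v_i\bigl(D\setminus S^f_i\mid S^f_i\bigr),
\]
and since $D\setminus S^f_i$ is disjoint from $S^f_i$, condition~\ref{nobuy_app} bounds the marginal by $v_i(D\setminus S^f_i\mid S^f_i)\le a\sum_{j\in D\setminus S^f_i}p_j$. Multiplying by $x_i^D$ and summing over $i$ and $D$,
\[
val(x)\;\le\;\sum_{i\in N}v_i(S^f_i)\sum_{D\subseteq X}x_i^D\;+\;a\sum_{i\in N}\sum_{D\subseteq X}x_i^D\sum_{j\in D\setminus S^f_i}p_j .
\]
In the first summand I use constraint~(\ref{eq:2con2LPR}), $\sum_D x_i^D\le 1$, together with $v_i(S^f_i)\ge 0$ (which holds because $b\,v_i(S^f_i)\ge\sum_{j\in S^f_i}p_j\ge 0$), to conclude it is at most $\sum_{i\in N}v_i(S^f_i)=val(f)$.

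The second summand is the crux, and I would exchange the order of summation to sum over items. Item $j$ contributes $a\,p_j$ times $\sum_{i\in N}\sum_{D\ni j,\ j\notin S^f_i}x_i^D$. If $f(j)=unalloc$ then $j\notin S^f_i$ for every $i$, so this inner sum equals $\sum_{i\in N}\sum_{D\ni j}x_i^D\le 1$ by constraint~(\ref{eq:2con1LPR}); if $j$ is allocated, then the terms with $i=f(j)$ drop out (as $j\in S^f_{f(j)}$, so $j$ never lies in $D\setminus S^f_{f(j)}$) and what remains is again $\le 1$ by the same constraint. Hence the coefficient of each $p_j$ is at most $a$, so the second summand is at most $a\sum_{j\in S^f}p_j+a\sum_{j:\,f(j)=unalloc}p_j$, where $S^f=\bigcup_{i\in N}S^f_i$ is the set of $f$-allocated items. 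Finally $\sum_{j\in S^f}p_j=\sum_{i\in N}\sum_{j\in S^f_i}p_j\le b\sum_{i\in N}v_i(S^f_i)=b\,val(f)$ by condition~\ref{nosale_app}. Adding up, $val(x)\le val(f)+a\,b\,val(f)+a\sum_{j:\,f(j)=unalloc}p_j$, which is~(\ref{eq:technical}).

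The only delicate point is the bookkeeping in that second summand: one has to check that every price $p_j$ is counted with total weight at most one --- this is exactly constraint~(\ref{eq:2con1LPR}) on the column sums --- and that the prices of $f$-allocated items are routed to the $a\,b\,val(f)$ term rather than to the error term, which works precisely because an item held by agent $f(j)$ in $f$ never appears in any $D\setminus S^f_{f(j)}$ and so contributes to the price sum only through the other agents.
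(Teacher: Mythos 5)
Your proof is correct and follows essentially the same route as the paper's: free disposal to pass from $v_i(D)$ to $v_i(S^f_i)+v_i(D\setminus S^f_i\mid S^f_i)$, the outward-stability condition to turn the marginals into prices, the LP column constraint to bound each item's total weight by one, and the individual-rationality condition to absorb the prices of allocated items into $a\,b\,val(f)$. The only (harmless) difference is that the paper simply drops the restriction $j\notin S^f_i$ and bounds by the full column sum, whereas you track it explicitly; your observation that $v_i(S^f_i)\ge 0$ follows from condition~\ref{nosale_app} is also a nice touch, consistent with the paper's remark that normalization is not needed.
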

\begin{proof}
By definition,
\mbox{$val(x) = \sum_{i \in N} \sum_{D \subseteq X} x^{i}_{D} v_{i}(D)$}.
By the free disposal assumption, then:
\[
val(x) \: \leq \:
\sum_{i \in N} \sum_{D \subseteq X} x^{i}_{D} v_{i}(D \cup S^{f}_{i}) \: = \:
\sum_{i \in N} \sum_{D \subseteq X} x^{i}_{D} v_{i}(S^{f}_{i}) +
\sum_{i \in N} \sum_{D \subseteq X} x^{i}_{D} v_{i}(D - S^{f}_{i} \mid S^{f}_{i}).
\]
First, by Equation~(\ref{eq:con1LPR})
\[
\sum_{i \in N} \sum_{D \subseteq X} x^{i}_{D} v_{i}(S^{f}_{i}) \: = \:
\sum_{i \in N} v_{i}(S^{f}_{i}) \sum_{D \subseteq X} x^{i}_{D} \: \leq \:
\sum_{i \in N} v_{i}(S^{f}_{i}) \: = \:
val(f).
\]
Then, by Equation~(\ref{eq:nobuy_app}), then Equation~(\ref{eq:con2LPR}) and finally by
Equation~(\ref{eq:nosale_app}) we have
\[
\sum_{i \in N} \sum_{D \subseteq X} 
x_{i}^{D} v_{i}(D - S^{f}_{i} \mid S^{f}_{i}) \: \leq \:
\sum_{i \in N} \sum_{D \subseteq X} a \, x_{i}^{D} \sum_{j \in D - S^{f}_{i}} p_{j} 
\: \leq \: \sum_{j \in X} \sum_{i \in N} \sum_{D \subseteq X , j \in D} 
a \, x_{i}^{D} p_{j} \: = \:
\]
\[
a \, \sum_{j \in X} p_{j} \sum_{i \in N} \sum_{D \subseteq X , j \in D} x_{i}^{D} \: \leq \:
a \, \sum_{j \in X} p_{j} \: = \: 
a \, \sum_{i \in N} \sum_{j \in S_{i}^{f}} p_{j} + 
a \, \sum_{j \in X , f(j) = unalloc} p_{j} \: \leq \:
\]
\[
a \, b \, \sum_{i \in N} v_{i}(S_{i}^{f}) + a \, \sum_{j \in X , f(j) = unalloc} p_{j} \: = \: 
a \, b \, val(f) + a \, \sum_{j \in X , f(j) = unalloc} p_{j}.
\]
We conclude that \mbox{$val(x) \leq (1 + a \, b ) \: val(f) + a \, \sum_{j \in X , f(j) = unalloc} p_{j}$}.
\end{proof}

\section{Dual linear program} \label{sec:dual}
The dual of \textbf{LP} will be described now.
\begin{center}
\textbf{Dual Linear Program (DLP):}
\end{center}
\nopagebreak Minimize
\begin{equation}
\label{eq:objDLP}
\sum_{j \in X} \: p_{j} + \sum_{i \in N} \: \pi_{i}
\end{equation}
under the constraints
\begin{equation}
\label{eq:posDLP}
p_{j} \geq 0 , \pi_{i} \geq 0 {\rm \ for \ all \ } j \in X , i \in N ,
{\rm \ and \ }
\end{equation}
\begin{equation}
\label{eq:conDLP}
\sum_{j \in D} \: p_{j} + \pi_{i} \geq  v_{i}(D) , {\rm \ for \ all \ }
D \subseteq X {\rm \ and \ } i \in N.
\end{equation}

\section{Properties of $a$-submodular valuations} \label{app:a-sub}
Some of the results below have been claimed without proof in~\cite{LLN:GEB}.
\begin{lemma} \label{the:bounded}
Let $v$ be $a$-submodular.
For any \mbox{$S, T, A \subseteq X$}, such that
\mbox{$S \subseteq T$} and \mbox{$A \cap T = \emptyset$} we have
\[
v(A \mid T) \, \leq \, a \, v(A \mid S).
\]
\end{lemma}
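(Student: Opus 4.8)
The plan is to fix the valuation $v$ and prove the inequality $v(A \mid T) \leq a \, v(A \mid S)$ by induction on the size of $A$. The base case $A = \emptyset$ is immediate since both sides are $0$. For the inductive step, I would pick an element $x \in A$, write $A = A' \cup \{x\}$ with $x \notin A'$, and try to split the marginal $v(A \mid T)$ along $x$, reducing to the inductive hypothesis on $A'$ together with a single-element application of $a$-submodularity.

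The natural decomposition to use is the chain rule for marginals: for disjoint bundles,
\[
v(A \mid T) \;=\; v(A' \mid T) + v(x \mid T \cup A'),
\]
and similarly $v(A \mid S) = v(A' \mid S) + v(x \mid S \cup A')$. To control the first summand I would apply the induction hypothesis to $A'$ (noting $A' \cap T = \emptyset$ and $S \subseteq T$), giving $v(A' \mid T) \leq a \, v(A' \mid S)$. For the second summand, I want $v(x \mid T \cup A') \leq a \, v(x \mid S \cup A')$; since $S \cup A' \subseteq T \cup A'$ and $x \notin T \cup A'$, this is exactly the statement that the single-item marginal of $x$ only grows (up to the factor $a$) as the context shrinks — which is itself the $|A| = 1$ case of the lemma. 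So really I should prove the one-element case first, directly from Definition~\ref{def:asub}.

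The one-element case is where the main work lies, and I expect it to be the chief obstacle. Definition~\ref{def:asub} says $v_W(B \cup \{x\}) \leq v_W(B) + a\, v_W(x)$ for $W, B$ disjoint and $x \notin W \cup B$; rewriting, $v(x \mid W \cup B) \leq a\, v(x \mid W) + \big(v(B \mid W) - v(B \mid W)\big)$ — more precisely, expanding $v_W(B \cup \{x\}) = v(B \cup \{x\} \cup W) - v(W)$ and $v_W(B) + a v_W(x)$, one gets $v(x \mid W \cup B) \leq a\, v(x \mid W)$. Setting $W = S \cup A'$ and $B = T \setminus S$ (so that $W \cup B = T \cup A'$, using $S \subseteq T$ and disjointness of $A'$ from $T$), this yields precisely $v(x \mid T \cup A') \leq a\, v(x \mid S \cup A')$. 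Care is needed to check the disjointness hypotheses of the definition hold for this choice of $W, B, x$, and that $T \setminus S$ may be a multi-element set — but the definition as stated already allows an arbitrary set $B$ in the first argument, so no further induction is needed there.

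Combining: $v(A \mid T) = v(A' \mid T) + v(x \mid T \cup A') \leq a\, v(A' \mid S) + a\, v(x \mid S \cup A') = a\big(v(A' \mid S) + v(x \mid S \cup A')\big) = a\, v(A \mid S)$, completing the induction. I would double-check throughout that Free disposal is what guarantees the marginals are handled consistently, though the argument is really purely algebraic once the single-item inequality is in hand.
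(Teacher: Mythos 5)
Your proof is correct and follows essentially the same route as the paper: induction on $|A|$, splitting off one element $x$ via the chain rule for marginals, and handling the single-item step by applying Definition~\ref{def:asub} with $W = S \cup A'$ and the set $T \setminus S$ (the paper writes this as $C = A \cup S$, $B - C = T - S$). The disjointness checks you flag all go through, so nothing is missing.
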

\begin{proof}
Let $v$, $S$, $T$ and $A$ be as in the assumptions.
We shall prove our claim by induction on the size of $A$.
If \mbox{$A \, = \, \emptyset$} the claim is obvious.
For the induction step, let \mbox{$x \in X - A - T$}.
\[
v(A \cup \{ x \} \mid T ) \, = \, v(A \mid T) + v(x \mid A \cup T).
\]
By the induction hypothesis \mbox{$v(A \mid T) \, \leq \,$} \mbox{$ a \, v(A \mid S)$}.
Let \mbox{$B \, = \,$} \mbox{$A \cup T$} and \mbox{$C \, = \,$} \mbox{$A \cup S$}.
We have \mbox{$C \subseteq B$} and
\[
v(x \mid B ) \, = \, v( C \cup (B - C) \cup \{ x \} ) - v(C \cup (B - C)) \, = \,
v_{C}((B - C) \cup \{ x \} ) - v_{C}(B - C) \, \leq \,
\]
\[
v_{C} ( B - C ) + a \, v_{C}(x)  - v_{C}(B - C) \, = \,
a \, v(x \mid C).
\]
We conclude that
\[ 
v(A \cup \{ x \} \mid T ) \, \leq \, a \, v(A \mid S) + a \, v(x \mid A \cup S) \, = \, 
a \, v(A \cup \{ x \} \mid S ).
\]
\end{proof}


\begin{lemma} \label{the:ASk}
Let $v$ be $a$-submodular. Let $A$, $B$ be disjoint bundles and for each \mbox{$x \in A$}
let \mbox{$B_{x} \subseteq B$}.
Then one has:
\[
v(A \mid B) \, \leq \, a \, \sum_{x \in A} v( x \mid B_{x} ).
\]
\end{lemma}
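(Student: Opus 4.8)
The plan is to enumerate the elements of $A$, telescope the marginal $v(A \mid B)$ into a sum of single-item marginals over a growing ground set, and then bound each term using Lemma~\ref{the:bounded}.

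First I would fix an arbitrary ordering $A = \{x_1, \dots, x_k\}$ and set $A_t = \{x_1, \dots, x_t\}$ for $0 \le t \le k$, so that $A_0 = \emptyset$ and $A_k = A$. Since by definition $v(D \mid S) = v(D \cup S) - v(S)$, the expression telescopes:
\[
v(A \mid B) = v(A \cup B) - v(B) = \sum_{t=1}^{k} ( v(A_t \cup B) - v(A_{t-1} \cup B) ) = \sum_{t=1}^{k} v(x_t \mid A_{t-1} \cup B).
\]

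Next I would bound each summand separately. Fix $t$. Because $A$ and $B$ are disjoint and the $x_i$ are distinct, $x_t \notin A_{t-1} \cup B$; and since $B_{x_t} \subseteq B$, we have $B_{x_t} \subseteq A_{t-1} \cup B$. Hence Lemma~\ref{the:bounded}, applied with the singleton $\{x_t\}$ playing the role of ``$A$'' in that lemma, with $S = B_{x_t}$ and $T = A_{t-1} \cup B$, gives
\[
v(x_t \mid A_{t-1} \cup B) \le a\, v(x_t \mid B_{x_t}).
\]
Summing this over $t = 1, \dots, k$ and combining with the telescoping identity yields $v(A \mid B) \le a \sum_{t=1}^{k} v(x_t \mid B_{x_t}) = a \sum_{x \in A} v(x \mid B_x)$, which is the claim.

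There is no real obstacle here: the only thing to check carefully is that the hypotheses of Lemma~\ref{the:bounded} hold at each step, i.e. the disjointness $\{x_t\} \cap (A_{t-1} \cup B) = \emptyset$ and the inclusion $B_{x_t} \subseteq A_{t-1} \cup B$, both immediate from the standing assumptions. One could alternatively argue by induction on $|A|$ directly from Definition~\ref{def:asub} together with Lemma~\ref{the:bounded}, but the telescoping argument is cleaner since it simply reuses what Lemma~\ref{the:bounded} already provides, one item at a time.
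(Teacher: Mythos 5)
Your proof is correct and takes essentially the same route as the paper's: the paper argues by induction on $|A|$, which is exactly the telescoping sum you write out, and it bounds each single-item marginal $v(x_t \mid A_{t-1} \cup B)$ by $a\, v(x_t \mid B_{x_t})$ via Lemma~\ref{the:bounded} just as you do.
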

\begin{proof}
By induction on the size of $A$. For \mbox{$A = \emptyset$} the claim is obvious.
For the induction step, let \mbox{$y \in X - A - B$} and \mbox{$B_{y} \subseteq B$}.
By the induction hypothesis and Lemma~\ref{the:bounded}.
\[
v(A \cup \{ y \} \mid B) \, = \, v(A \mid B) + v(y \mid A \cup B) \, \leq \,
\]
\[
a \, \sum_{x \in A} v( x \mid B_{x}) + a \, v( y \mid B_{y} ) \, = \,
a \, \sum_{x \in A + y} v(x \mid B_{x} ).
\]
\end{proof}

The following result will be instrumental in Section~\ref{sec:opt_equil}.
\begin{lemma} \label{the:main_bounded}
Let the valuation $v$ be $a$-submodular.
\begin{enumerate}
\item
For any \mbox{$A \subseteq S \subseteq X$}.
\[
a \: v(A \mid S - A ) \: \geq \: \sum_{j \in A} v( j \mid S - \{ j \} ) 
\]
and 
\item for any \mbox{$A , S \subseteq X$} such that \mbox{$A \cap S \, = \, \emptyset$},
\[
v( A \mid S ) \: \leq \: a \: \sum_{j \in A} v( j  \mid S ).
\]
\end{enumerate}
\end{lemma}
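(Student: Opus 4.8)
The plan is to prove the two parts separately, each by reduction to a lemma already established in this appendix.

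For part~2 I would simply invoke Lemma~\ref{the:ASk}. That lemma states that for an $a$-submodular $v$, disjoint bundles $A$ and $B$, and any choice of $B_x \subseteq B$ for $x \in A$, one has $v(A \mid B) \le a \sum_{x \in A} v(x \mid B_x)$. Taking $B = S$ and $B_x = S$ for every $x \in A$ (which is legitimate since $A \cap S = \emptyset$) gives $v(A \mid S) \le a \sum_{j \in A} v(j \mid S)$ immediately.

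For part~1 the idea is a telescoping expansion of $v(A \mid S - A)$ combined with Lemma~\ref{the:bounded}. Fix an arbitrary enumeration $A = \{x_1, \dots, x_k\}$ and, for $0 \le i \le k$, set $T_i = S \setminus \{x_{i+1}, \dots, x_k\}$; using $A \subseteq S$ one checks $T_0 = S - A$, $T_k = S$, and $T_{i-1} = T_i \setminus \{x_i\}$. Telescoping then yields
\[
v(A \mid S - A) \;=\; v(S) - v(S-A) \;=\; \sum_{i=1}^{k} \bigl( v(T_i) - v(T_{i-1}) \bigr) \;=\; \sum_{i=1}^{k} v\bigl(x_i \mid S \setminus \{x_i, x_{i+1}, \dots, x_k\}\bigr).
\]
For each $i$ I would apply Lemma~\ref{the:bounded} with the singleton $\{x_i\}$, the smaller set $S' = S \setminus \{x_i, \dots, x_k\}$ and the larger set $T' = S - \{x_i\}$ (note $S' \subseteq T'$ and $\{x_i\} \cap T' = \emptyset$), obtaining
\[
v(x_i \mid S - \{x_i\}) \;\le\; a \, v\bigl(x_i \mid S \setminus \{x_i, \dots, x_k\}\bigr).
\]
Summing over $i$ and substituting the telescoping identity gives $\sum_{j \in A} v(j \mid S - \{j\}) \le a \, v(A \mid S - A)$, as required.

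The only delicate point is the set bookkeeping in the telescoping step — verifying that $(S - A) \cup \{x_1, \dots, x_{i-1}\}$ equals $S \setminus \{x_i, \dots, x_k\}$, which is where $A \subseteq S$ is used — and checking at each stage that the inclusion hypothesis $S' \subseteq T'$ of Lemma~\ref{the:bounded} holds. There is no analytic difficulty; all the work is in aligning the conditioning sets so the previously proved lemmas apply, and the enumeration of $A$ can be chosen arbitrarily since the final inequality is enumeration-free.
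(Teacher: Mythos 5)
Your proof is correct, and for part~1 it is essentially the paper's own argument: the same telescoping of $v(A \mid S-A)$ along an enumeration of $A$ followed by Lemma~\ref{the:bounded} applied to each singleton, your $T_{i-1}$ being exactly the paper's $(S-A)\cup A_{i-1}$. For part~2 you take a harmless shortcut by specializing Lemma~\ref{the:ASk} to $B_x = B = S$, where the paper instead unrolls Definition~\ref{def:asub} directly with $W = S$; since Lemma~\ref{the:ASk} is established earlier in the appendix from Lemma~\ref{the:bounded} alone, there is no circularity and the two routes are equivalent in content.
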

\begin{proof}
Let \mbox{$A \, = \, \{ j_{1} , j_{2} , \ldots , j_{k} \}$} and let 
\mbox{$A_{0} = \emptyset$}, \mbox{$A_{1} = \{ j_{1} \}$}, 
\mbox{$A_{i} = \{ j_{1} , \ldots , j_{ i } \}$} for \mbox{$i \geq 2$}.
We have 
\[ 
v(A \mid S - A ) \, = \,
\sum_{i = 0}^{k - 1} v(j_{i + 1} \mid ( S - A ) \cup A_{i} ).
\]
By Lemma~\ref{the:bounded} we have
\mbox{$v(j_{i + 1} \mid S - \{ j_{i + 1 } \} ) \, \leq \,$}
\mbox{$a \, v( j_{i + 1} \mid ( S - A ) \cup A_{i} )$}
and this proves our first claim.
Now, by Definition~\ref{def:asub},
\[
v( A \mid S ) \, \leq \, v( A_{k} \mid S ) + a \, v(j_{k} \mid S ) \, \leq \, 
\]
\[
v(A_{k - 1} \mid S ) + 
a \, v(j_{k - 1}\mid S ) + a \, v(j_{k} \mid S )
\, \leq \, \ldots \, \leq \,  a \, \sum_{i = 1}^{k} v(j_{i} \mid S ).
\]
\end{proof}

\section{Quasi-Walrasian equilibria} \label{sec:quasi-Walras}
A quasi-Walrasian equilibrium of quality $q$, \mbox{$0 \leq q \leq 1$} consists of a partial
allocation and a price vector such that every agent gets from his bundle, at the given prices, 
a utility that is at least the utility he would get from any other bundle discounted by $q$.
\begin{definition} \label{def:quasi-Walras}
Suppose an economy \mbox{$E = (N , X , \: v_{i} , i \in N)$} is given
and let \mbox{$0 \: \leq \: q \: \leq \: 1$}.
A $q$-quasi-Walrasian equilibrium 
\mbox{$( f , p )$} is a pair where $f$ is a partial allocation of the
items to the agents and $p$ is a price vector that satisfy the following two conditions:
\begin{enumerate}
\item \label{unallocW} 
for any \mbox{$j \in X$} such that \mbox{$f(j) = unallocated$} one has
\mbox{$p_{j} = 0$},
\item \label{W-cond} 
for any \mbox{$i \in N$} and for any \mbox{$A \subseteq X$} one has
\begin{equation} \label{eq:nosaleW}
v_{i}(S^{f}_{i}) - \sum_{j \in S^{f}_{i}} p_{j}\: \geq \: q \, ( v_{i}(A) - 
\sum_{j \in A} \: p_{j} ).
\end{equation}
\end{enumerate}
\end{definition}
Note that the notion of a Walrasian equilibrium in the $\alpha$-endowed valuations studied 
in~\cite{Babaioff_Endowment:EC18} is incomparable with that of a 
$\frac{1}{\alpha}$-quasi-Walrasian equilibrium.
Clearly a pair \mbox{$(f , p )$} is a $1$-quasi-Walrasian equilibrium iff it is a Walrasian
equilibrium and any allocation together with zero prices 
provides a $0$-quasi-Walrasian equilibrium.

The value of the allocation of a $q$-quasi-Walrasian equilibrium is at least $q$ times the social
optimum.
\begin{theorem} [First social quasi welfare theorem] \label{the:first_quasi}
If \mbox{$(f , p)$} is a $q$-quasi-Walrasian equilibrium, then 
\mbox{$val(f) \geq q \: val(g)$} for any partial allocation $g$.
\end{theorem}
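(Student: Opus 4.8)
The plan is to test the defining inequality of a $q$-quasi-Walrasian equilibrium against the bundles prescribed by $g$. Fix an arbitrary partial allocation $g$. For each agent $i \in N$, apply condition~\ref{W-cond} of Definition~\ref{def:quasi-Walras} with the particular choice \mbox{$A = S_{i}^{g}$}, obtaining
\[
v_{i}(S^{f}_{i}) - \sum_{j \in S^{f}_{i}} p_{j} \: \geq \: q \, \Big( v_{i}(S_{i}^{g}) - \sum_{j \in S_{i}^{g}} p_{j} \Big).
\]
Summing this over all \mbox{$i \in N$} gives
\[
val(f) - \sum_{i \in N} \sum_{j \in S^{f}_{i}} p_{j} \: \geq \: q \, val(g) - q \sum_{i \in N} \sum_{j \in S_{i}^{g}} p_{j}.
\]

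The next step is to control the two price sums. Because $f$ is an allocation, the sets $S_i^f$ are disjoint and their union is exactly the set of allocated items; combined with condition~\ref{unallocW} (unallocated items carry price zero), we get \mbox{$\sum_{i \in N} \sum_{j \in S^{f}_{i}} p_{j} = \sum_{j \in X} p_{j}$}. On the other side, the sets $S_i^g$ are also disjoint, so \mbox{$\sum_{i \in N} \sum_{j \in S_{i}^{g}} p_{j} = \sum_{j : g(j) \neq unalloc} p_{j} \leq \sum_{j \in X} p_{j}$}, using only that prices are nonnegative. Substituting the equality on the left and the inequality (with its sign) on the right yields
\[
val(f) - \sum_{j \in X} p_{j} \: \geq \: q \, val(g) - q \sum_{j \in X} p_{j},
\]
i.e. \mbox{$val(f) \geq q \, val(g) + (1 - q) \sum_{j \in X} p_{j}$}.

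Finally, since \mbox{$0 \leq q \leq 1$} and every $p_j \geq 0$, the last term is nonnegative, so \mbox{$val(f) \geq q \, val(g)$}, as required. I do not expect any real obstacle here: the only point demanding care is keeping the direction of the inequality straight when the price sum on the $g$-side is bounded above but enters with a minus sign, and observing at the end that the leftover term $(1-q)\sum_{j\in X} p_j$ works in our favour rather than against us.
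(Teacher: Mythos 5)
Your proof is correct and follows essentially the same route as the paper's: instantiate the quasi-Walrasian condition with $A = S_i^g$, sum over agents, convert the $f$-side price sum to $\sum_{j \in X} p_j$ via the zero-price condition on unallocated items, bound the $g$-side price sum by the same total, and observe that the leftover $(1-q)\sum_{j \in X} p_j$ is nonnegative. No gaps.
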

Note that the approximation $q$ here is better than the \mbox{$\frac{q^{2}}{1 + q^{2}}$}
of Theorem~\ref{the:first}, but that the comparison is, here, with the integral social optimum, 
not with the fractional, higher, optimum.
\begin{proof}
Let $g$ be the social optimum.
We have, by condition~\ref{unallocW} and then condition~\ref{W-cond} 
of Definition~\ref{def:quasi-Walras}:
\[
val(f) \: = \: \sum_{i \in N} v_{i}(S^{f}_{i})  \:  = \: 
\sum_{i \in N} ( v_{i} ( S^{f}_{i}) - \sum_{j \in S^{f}_{i}} p_{j} ) + \sum_{j \in X} p_{j}
\: \geq \: 
\]
\[
q \: \sum_{i \in N} (v_{i}(S^{g}_{i} ) - \sum_{j \in S^{g}_{i}} p_{j} ) + 
\sum_{j \in X} p_{j} \: = \:
q \: \sum_{i \in N} v_{i}(S^{g}_{i} ) - q \: \sum_{i \in N} \sum_{j \in S^{g}_{i}} p_{j} + 
\sum_{j \in X} p_{j} \: \geq \: q \: val(g).
\]
\end{proof}

\begin{lemma} \label{le:W-local}
Any $q$-quasi-Walrasian equilibrium is a $q$-local-equilibrium.
\end{lemma}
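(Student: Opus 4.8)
The plan is to read off the three conditions of Definition~\ref{def:local_eq} for the pair $(f,p)$ with $r = s = q$, extracting each one from the single quasi-Walrasian inequality~(\ref{eq:nosaleW}) by an appropriate choice of the comparison bundle. Condition~\ref{unalloc} of Definition~\ref{def:local_eq} is literally condition~\ref{unallocW} of Definition~\ref{def:quasi-Walras}, so there is nothing to prove there. At $q = 1$ the statement specializes to the familiar fact that a Walrasian equilibrium is a conditional equilibrium, and the argument below is meant to be a parametrized version of that classical proof.

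For $q$-Individual Rationality, \mbox{inequality~(\ref{eq:nosale})}, I would instantiate~(\ref{eq:nosaleW}) at the empty bundle $A = \emptyset$. By Normalization $v_i(\emptyset) = 0$ and the price sum over the empty set vanishes, so the inequality collapses to $v_i(S_i^f) \geq \sum_{j \in S_i^f} p_j$. Since all prices are nonnegative and $q \leq 1$, the right-hand side dominates $q \sum_{j \in S_i^f} p_j$, which is exactly what is required.

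The substantive step is $q$-Outward Stability, \mbox{inequality~(\ref{eq:nobuy})}. Given an agent $i$ and a bundle $A$ with $A \cap S_i^f = \emptyset$, I would instantiate~(\ref{eq:nosaleW}) at the enlarged bundle $S_i^f \cup A$; using disjointness to split $\sum_{j \in S_i^f \cup A} p_j = \sum_{j \in S_i^f} p_j + \sum_{j \in A} p_j$ and the identity $v_i(S_i^f \cup A) - v_i(S_i^f) = v_i(A \mid S_i^f)$, the inequality rearranges into a relation among $\sum_{j \in A} p_j$, the quantity $q\, v_i(A \mid S_i^f)$, and a residual multiple of $(1-q)\bigl(v_i(S_i^f) - \sum_{j \in S_i^f} p_j\bigr)$. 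One then tries to discard this residual using the Individual-Rationality slack established in the previous paragraph, namely $v_i(S_i^f) - \sum_{j \in S_i^f} p_j \geq 0$. This reconciliation of the discount factor $q$ is the step I expect to be the real obstacle: one must verify that the sign of the residual term is favourable, i.e.\ that the slack coming from Individual Rationality is genuinely enough to yield~(\ref{eq:nobuy}) with the factor $q$ standing in front of the marginal value, rather than merely a weaker estimate carrying an extra additive term. Should that bookkeeping not close in full generality, the natural fallbacks are to state the implication with a suitably degraded quality parameter, or to restrict attention to singleton bundles $A$, the single-item regime already exploited in Theorem~\ref{the:submod}.
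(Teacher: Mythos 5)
Your handling of condition~\ref{unalloc} and of $q$-Individual Rationality coincides with the paper's: instantiate~(\ref{eq:nosaleW}) at $A=\emptyset$, use Normalization, and observe that Individual Rationality then holds even with parameter $1$. But the obstacle you flag in the Outward Stability step is not a bookkeeping nuisance that a careful sign check dissolves --- the sign is genuinely unfavourable and the step fails. Writing \mbox{$u_i = v_i(S_i^f)-\sum_{j\in S_i^f}p_j$} and instantiating~(\ref{eq:nosaleW}) at $S_i^f\cup A$ gives \mbox{$u_i \geq q\bigl(u_i + v_i(A\mid S_i^f) - \sum_{j\in A}p_j\bigr)$}, hence
\[
\sum_{j\in A}p_j \;\geq\; v_i(A\mid S_i^f) \,-\, \frac{1-q}{q}\,u_i .
\]
Individual Rationality only yields $u_i\geq 0$, so the subtracted residual is nonnegative and can dwarf $v_i(A\mid S_i^f)$; no bound of the form $\sum_{j\in A}p_j\geq s\,v_i(A\mid S_i^f)$ with $s>0$ follows. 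Indeed the lemma is false for $0<q<1$: take one agent, \mbox{$X=\{a,b\}$}, \mbox{$v(a)=10$}, \mbox{$v(b)=1$}, \mbox{$v(ab)=12$}, allocate $a$ to the agent, leave $b$ unallocated, and set \mbox{$p_a=p_b=0$}. The agent's utility from his bundle is $10$ while \mbox{$q\,(v(D)-\sum_{j\in D}p_j)\leq 12\,q\leq 10$} for every $D$ once \mbox{$q\leq 5/6$}, so this is a $\frac{1}{2}$-quasi-Walrasian equilibrium; yet Outward Stability for \mbox{$A=\{b\}$} would require \mbox{$q\,v(b\mid a)=2q\leq p_b=0$}, so the pair is a $q$-local equilibrium only for \mbox{$q=0$}. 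Since this $A$ is a singleton and no positive quality survives, neither of your proposed fallbacks rescues the statement.

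You should also know that the paper's own proof commits exactly the error you were guarding against: from the displayed inequality \mbox{$u_i \geq q\,\bigl(v_i(S_i^f\cup A)-\sum_{j\in S_i^f\cup A}p_j\bigr)$} it concludes ``therefore \mbox{$\sum_{j\in A}p_j\geq v_i(A\mid S_i^f)$},'' which is the $q=1$ deduction and does not follow once $q<1$. So your proposal is correct as far as it goes, and stopping where you did was the right call: the lemma holds at $q=1$ (the classical ``Walrasian implies conditional'' argument) but is false as stated for $0<q<1$, and any repair must either weaken the conclusion to the additive bound displayed above or add hypotheses controlling the slack $u_i$.
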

\begin{proof}
Let \mbox{$(f , p)$} be a $q$-quasi-Walrasian equilibrium.
Let us show that the three conditions of Definition~\ref{def:local_eq} are satisfied.
Condition~\ref{unalloc} is explicitly satisfied by Definition~\ref{def:quasi-Walras}.
For Individual Rationality, notice that, for any \mbox{$i \in N$},
\[
v_{i}(S^{f}_{i}) - \sum_{j \in S^{f}_{i}} p_{j} \: \geq \: 
q \: ( v_{i}(\emptyset) - 0 ) \: = \: 0.
\]
Equation~(\ref{eq:nosale}) is satisfied even with a parameter $q$ equal to $1$.
For Outward Stability note that 
\[
v_{i}(S^{f}_{i}) - \sum_{j \in S^{f}_{i}} p_{j} \: \geq \: q \: (v_{i}(S^{f}_{i} \cup A) - \sum_{j \in S^{f}_{i} \cup A} p_{j} )
\] 
and therefore
\[
\sum_{j \in A} p_{j} \: \geq \: v_{i}(S^{f}_{i} \cup A) - v_{i}(S^{f}_{i}) \: = \: 
v_{i}( A \mid S^{f}_{i}).
\]
\end{proof}

Let us now consider the $q$-quasi-Walrasian equilibria of Example~\ref{ex:no_Wal}.
Both agents have the same valuation $v$.
We shall show that there are no such equilibria for any \mbox{$q > 0$}.
First, assume that agent $1$ receives an empty bundle in a $q$-quasi-Walrasian equilibrium.
His utility is $0$ and therefore, comparing with receiving two of the three items, we see that
\[
0 \: \geq \: q \: (3 - p_{1} - p_{2}) , 0 \: \geq \: q \: (3 - p_{1} - p_{3}) , 0 \: \geq \: 
q \: (3 - p_{2} - p_{3})
\]
and therefore, if \mbox{$q > 0$}, \mbox{$p_{1} + p_{2} + p_{3} \geq 4.5$}.
Note that the parameter $q$ has disappeared from the inequality.
If agent $2$ is allocated the whole bundle of three items, he must prefer this 
to the empty bundle and we must have:
\[
4 - p_{1} - p_{2} - p_{3} \: \geq \: q \: 0 \: = \: 0
\]
which is impossible.
But, if an item, say item $3$ is unallocated, we must have \mbox{$p_{3} = 0$} and
\mbox{$p_{1} + p_{2} \geq 4.5$}.
Allocating the bundle \mbox{$(1 , 2)$} to agent $2$ would imply 
\[
3 - p_{1} - p_{2} \: \geq \: q \: 0 \: = \: 0
\]
which is impossible. 
We conclude that at least two items must be unallocated, but this can be shown similarly
to imply that no item is allocated, which is clearly impossible.
We have shown that there is no $q$-quasi-Walrasian equilibrium for \mbox{$q > 0$} in which
some agent receives an empty bundle.

Suppose now that agent $1$ is allocated a single item, say item $3$.
We must have
\[
0 - p_{3} \: \geq \: q \: 0 = 0
\]
and therefore \mbox{$p_{3} = 0$}.
Since agent $1$ prefers item $3$ to all three items we have:
\[ 
0 - p_{3} = 0 \: \geq \: q \: ( 4 - p_{1} - p_{2})
\]
and \mbox{$p_{1} + p_{2} \geq 4$}.
In such a situation, agent $2$ cannot be allocated the pair \mbox{$(1 , 2 )$}, nor can he be allocated any single item.

We conclude that, if \mbox{$q > 0$}, in any $q$-quasi-Walrasian eqilibrium each of the two
agents must be allocated at least two of the three items, which is impossible.
The only $q$-quasi-Walrasian equilibria have \mbox{$q = 0$}.

\end{document}